\numberwithin{equation}{section}
\newtheorem{theorem}{Theorem}[section]
\newtheorem{lemma}{Lemma}[section]
\newtheorem{remark}{Remark}[section]
\newtheorem{cor}{Corollary}[section]
\newtheorem{proposition}{Proposition}[section]
\begin{document}

\baselineskip 18pt

\newcommand{\inn}[2]{\langle #1, #2 \rangle}
\newcommand{\E}{\mathbb{E}}
\newcommand{\Eof}[1]{\mathbb{E}\left[ #1 \right]}
\newcommand{\Pof}[1]{\mathbb{P}\left[ #1 \right]}
\renewcommand{\H}{\mathbb{H}}
\newcommand{\R}{\mathbb{R}}
\newcommand{\sigl}{\sigma_L}
\newcommand{\BS}{\rm BS}
\newcommand{\p}{\partial}
\renewcommand{\P}{\mathbb{P}}
\newcommand{\var}{{\rm var}}
\newcommand{\cov}{{\rm cov}}
\newcommand{\beaa}{\begin{eqnarray*}}
\newcommand{\eeaa}{\end{eqnarray*}}
\newcommand{\bea}{\begin{eqnarray}}
\newcommand{\eea}{\end{eqnarray}}
\newcommand{\ben}{\begin{enumerate}}
\newcommand{\een}{\end{enumerate}}
\newcommand{\bit}{\begin{itemize}}
\newcommand{\eit}{\end{itemize}}

\newcommand{\bX}{\boldsymbol{X}}
\newcommand{\bY}{\boldsymbol{Y}}
\newcommand{\bt}{\boldsymbol{t}}
\newcommand{\bv}{\boldsymbol{v}}
\newcommand{\bx}{\boldsymbol{x}}
\newcommand{\by}{\boldsymbol{y}}
\newcommand{\bE}{\mathbf{e}}
\newcommand{\bw}{\mathbf{w}}
\newcommand{\bW}{\boldsymbol{W}}
\newcommand{\bB}{\boldsymbol{B}}
\newcommand{\bZ}{\boldsymbol{Z}}
\newcommand{\bH}{\mathbf{H}}
\newcommand{\bF}{\mathbf{F}}
\newcommand{\bG}{\mathbf{G}}
\newcommand{\bs}{\mathbf{s}}
\newcommand{\bsihat}{\widehat{\bs_i}}

\newcommand{\cL}{\mathcal{L}}
\newcommand{\cI}{\mathcal{I}}
\newcommand{\cR}{\mathcal{R}}

\newcommand{\mt}{\mathbf{t}}
\newcommand{\mS}{\mathbb{S}}

\newcommand{\bsB}{\boldsymbol{B}}
\newcommand{\bsb}{\boldsymbol{b}}
\newcommand{\bsK}{\boldsymbol{K}}
\newcommand{\bsR}{\boldsymbol{R}}
\newcommand{\bsxi}{\boldsymbol{\xi}}
\newcommand{\bseta}{\boldsymbol{\eta}}
\newcommand{\bszeta}{\boldsymbol{\zeta}}

\newcommand{\argmax}{{\rm argmax}}
\newcommand{\argmin}{{\rm argmin}}

\newcommand{\EE}{\mathbb{E}}
\newcommand{\tM}{\widetilde{M}}
\newcommand{\tE}{\tilde{\mathbb{E}}}
\newcommand{\hE}{\hat{\mathbb{E}}}
\newcommand{\tEof}[1]{\tilde{\mathbb{E}}\left[ #1 \right]}
\newcommand{\hP}{\hat{\mathbb{P}}}
\newcommand{\tP}{\tilde{\mathbb{P}}}
\newcommand{\tW}{\tilde{W}}
\newcommand{\tB}{\tilde{B}}
\newcommand{\1}{\mathbf{1}}
\renewcommand{\O}{\mathcal{O}}
\newcommand{\dt}{\Delta t}
\newcommand{\tr}{{\rm tr}}
\newcommand{\He}{{\rm He}}

\newcommand{\Xv}{X^{(v)}}
\newcommand{\Xvs}{X^{(v^*)}}
\newcommand{\Jv}{J^{(v)}}

\newcommand{\cG}{\mathcal{G}}
\newcommand{\cF}{\mathcal{F}}
\newcommand{\cLv}{\mathcal{L}^{(v)}}

\def\theequation{\thesection.\arabic{equation}}
\def\thetheorem{\thesection.\arabic{theorem}}

\renewcommand{\theequation}{\arabic{section}.\arabic{equation}}

\def\cprime{$'$}
\def\blue#1{\textcolor{blue}{#1}}

\title[Probability density of lognormal fSABR]{Probability density of lognormal fractional SABR model}

\author{Jiro Akahori, Xiaoming Song and Tai-Ho Wang}

\address{Jiro Akahori \newline
Department of Mathematical Sciences \newline
Ritsumeikan University \newline
Noji-higashi 1-1-1, Kusatsu, Shiga, 525-8577, Japan
}
\email{akahori@se.ritsumei.ac.jp}

\address{Xiaoming Song \newline
Department of Mathematics \newline
Drexel University\newline
32nd and Market Streets, Philadelphia, PA 19096
}
\email{song@math.drexel.edu}

\address{Tai-Ho Wang \newline
Department of Mathematics \newline
Baruch College, The City University of New York \newline
1 Bernard Baruch Way, New York, NY10010 \newline
and \newline
Department of Mathematical Sciences \newline
Ritsumeikan University \newline
Noji-higashi 1-1-1, Kusatsu, Shiga, 525-8577, Japan
}
\email{tai-ho.wang@baruch.cuny.edu}

\date{}
\maketitle

\begin{abstract}
Instantaneous volatility of logarithmic return in the lognormal fractional SABR model is driven by the exponentiation of a correlated fractional Brownian motion. Due to the mixed nature of driving Brownian and fractional Brownian motions, probability density for such a model is less studied in the literature. We show in this paper a bridge representation for the joint density of the lognormal fractional SABR model in a Fourier space. Evaluating the bridge representation along a properly chosen deterministic path yields a small time asymptotic expansion to the leading order for the probability density of the fractional SABR model. A direct generalization of the representation to joint density at multiple times leads to a heuristic derivation of the large deviations principle for the joint density in small time. Approximation of implied volatility is readily obtained by applying the Laplace asymptotic formula to the call or put prices and comparing coefficients.
\end{abstract}

\noindent{\em Keywords:} Asymptotic expansion, Lognormal fractional SABR model, Mixed fractional Brownian motion, Malliavin calculus, Bridge representation.


%
%

\newcommand{\F}{\mathcal{F}}
\renewcommand{\th}{\tilde{h}}

%
%

\section{Introduction}


The celebrated Black and Black-Scholes-Merton models have been the benchmark for European options on currency exchange, interest rates, and equities since the inauguration of the trading on financial derivatives.  However, empirical evidences have shown that the main drawback of these models is the assumption of constant volatility; the key parameter required in the calculation of option premia under such models. The volatility parameters induced from market data are in fact nonconstant across markets; dubbed as {\it volatility smile}. The Stochastic $\alpha\beta\rho$ (SABR for short hereafter) model, suggested by Hagan, Lesniewski, and Woodward in \cite{hlw}, is one of the  models, such as local volatility models, stochastic volatility models, and exponential L\'evy type of models etc, that attempts to capture the volatility smile effect.
Furthermore, as opposed to local volatility models, in SABR model the volatility smile moves in the same direction as the underlying with time, see \cite{hklw}.

The SABR model is depicted by the following system of stochastic differential equations (SDEs):
\bea
&& dF_t = \alpha_t F_t^\beta dW_t, \quad F_0 = F, \label{eqn:sabr-f} \\
&& d\alpha_t = \nu \alpha_t dZ_t, \quad \alpha_0 = \alpha, \label{eqn:sabr-a}
\eea
with $\beta \in [0,1]$, where $F_t$ is the forward price and $\alpha_t$ is the instantaneous volatility. $W_t$ and $Z_t$ are correlated Brownian motions with constant correlation coefficient $\rho$.
The SABR model is at times referred to as the lognormal SABR model when $\beta = 1$.
The SABR formula is an asymptotic expansion for the implied volatilities of call options with various strikes in small time to expiry. For reader's convenience, we reproduce the SABR formula in the following. Let $\sigma_{BS}(K,\tau)$ be the implied volatility of a vanilla option struck at $K$ and time to expiry $\tau$. The SABR formula states
\begin{equation} \label{eqn:sabr-formula}
\sigma_{BS}(K,\tau) = \nu \, \frac{\log(F/K)}{D(\zeta)}
\left\{1 + O(\tau) \right\}
\end{equation}
as time to expiry $\tau$ approaches 0. The function $D$ and the parameter $\zeta$ involved in \eqref{eqn:sabr-formula} are defined respectively as
\[
D(\zeta) = \log\left( \frac{\sqrt{1 - 2\rho\zeta + \zeta^2} + \zeta - \rho}{1-\rho}\right)
\]
and
\[
\zeta = \left\{\begin{array}{ll}
\frac\nu\alpha \, \frac{F^{1 - \beta} - K^{1 - \beta}}{1 - \beta} & \mbox{ if } \beta \neq 1; \\
& \\
\frac\nu\alpha \log\left(\frac FK \right) & \mbox{ if } \beta = 1.
\end{array}\right.
\]
Generally, the SABR formula is given one order higher, up to order $\tau$. Here we present only up to zeroth order for our own purpose.

The geometry of SABR model is isometrically diffeomorphic to the two dimensional hyperbolic space or the Poincar\'e plane. This isometry leads to a derivation of the SABR formula \eqref{eqn:sabr-formula} based on an expression of the heat kernel, known as the McKean kernel, on Poincar\'e plane. In particular, the lowest order term in \eqref{eqn:sabr-formula} has a geometric interpretation. The function $D$ is the shortest geodesic distance from the spot value $(F_0, \alpha_0)$ to the vertical line $F = K$ in the upper half plane $\{(F,\alpha) \in \mathbb{R}^2 : \alpha \geq 0\}$. Hence, the lowest order term in \eqref{eqn:sabr-formula} is indeed the ratio between the absolute value of logmoneyness, i.e.,  $\log(K/F_0)$, and the shortest geodesic distance from $(F_0,\alpha_0)$ to the vertical line $F = K$ in the upper half plane. We refer readers interested in this topic to \cite{hlw} for more detailed discussions. As expression for heat kernel on hyperblic space is concerned, Ikeda and Matsumoto in \cite{ikeda-matsumoto} provided a probabilistic approach and obtained, among other interesting results, a representation for the transition density of hyperbolic Brownian motion, i.e., the heat kernel over  the Poincar\'e plane. See Theorem 2.1 in \cite{ikeda-matsumoto} for details.

The aforementioned nice isometry between  SABR model and Poincar\'e plane breaks down if the volatility process, i.e., the $\alpha_t$ process in \eqref{eqn:sabr-f}, is driven by a fractional Brownian motion such as the second equation in \eqref{eqn:fSABR-s} considered in the paper. Moreover, due to the lack of Markovianity of fractional Brownian motions and thus the nonexistence of the forward and backward Kolmogorov equations, the classical asymptotic expansion approaches, such as the heat kernel or WKB expansion, are no longer applicable either. In this regard, the probabilistic approach in \cite{ikeda-matsumoto} is more applicable and tractable when dealing with processes driven by fractional Brownian motions.

The volatility process is generally conceived behaving ``fractionally" in that the driving noise is a  fractional process, e.g., a fractional Brownian motion with Hurst exponent other than a half. For a far from exhausting list, models that attempt to incorporate the fractional feature of volatility include:
%
the ARFIMA model in \cite{granger} and the FIGARCH model \cite{figarch} for discrete time models; the long memory stochastic volatlity model in \cite{comte-renault} and the affine fractional stochastic volatilility model in \cite{comte-coutin-renault} for continuous time models.
Somewhat on the contrary, in a recent study in \cite{gatheral-jaisson-rosenbaum}, the Hurst exponent $H$ is estimated as being less than a half; thereby indicating antipersistency as opposed to persistency of the volatility process. It is also worth mentioning that generalizations of Heston model to fractional version have been recently considered in \cite{rosenbaum} and \cite{jack}.
Heston related models are usually dealt with via the characteristic and/or moment generating functions. However, in this paper we take the approach following closely the methodology in \cite{ikeda-matsumoto}.

In order to embed the empirically observed fractional feature of the volatility process into the classical SABR model, we suggest in this paper a fractional version of the SABR model as in \eqref{eqn:fSABR-s}. Modulo a mean-reversion component, this model aligns with the model statistically tested in \cite{gatheral-jaisson-rosenbaum}. The main observation in \cite{gatheral-jaisson-rosenbaum} is that, using square root of the realized/integrated variance as a proxy for the instantaneous volatility, the logarithm of the volatility process behaves like a fractional Brownian motion in almost any time scale of frequency. The Hurst exponent $H$ inferred from the time series data is less than a half; indeed, $H \approx 0.1$. This observation of small Hurst exponent in the volatility process makes the analysis of the model more technical and challenging from stochastic analysis point of view. To our knowledge, most of the small time asymptotic expansions for processes driven by fractional Brownian motions have restrictions on the Hurst exponent $H$ of the driving fractional Brownian motion, mostly $H \geq \frac14$. One of the advantage of the approach undertaken in the current paper is that it works without restriction on the Hurst exponent $H$. The key ingredient is a representation in a Fourier space, which we call the bridge representation in Section \ref{sec:bridge-rep}, for the joint density of log spot and volatility, see \eqref{eqn:bridge-rep}.

A small time asymptotic expansion of the joint density is readily obtained from the bridge representation. The idea is to approximate the conditional expectation in the bridge representation by a judiciously chosen deterministic path since, conditioned on the initial and terminal points, at each point in time a Gaussian process will not wander too far away from its expectation.
As long as an asymptotic expansion for the density of the underlying asset is available, to obtain an expansion for implied volatility is almost straightforward by basically comparing the coefficients with a similar expansion obtained by using the lognormal density on the Black or the Black-Scholes-Merton side.

The methodology of deriving the bridge representation \eqref{eqn:bridge-rep} can be generalized directly to obtain a bridge representation for the joint density of multiple times; hence inducing a representation for finite dimensional distributions of the fractional SABR model, see Theorem \ref{thm:rep-multi}. Based on this bridge representation for finite dimensional distributions, Section \ref{sec:ldp} is devoted to a heuristic yet appealing derivation of the large deviations principle for the joint density of the fractional SABR model in small time. This large deviations principle in a sense can be regarded as defining a ``geodesic distance" over the fractional SABR plane since, as we shall show in Section \ref{sec:ldp}, it recovers the energy functional on the Poincar\'e plane when $H = \frac12$. We leave the rigorous proof of the large deviations principle in a future work. An immediate consequence of this large deviation principle is the fractional SABR formula (to the lowest order) \eqref{eqn:fSABR} which recovers the classical SABR formula when $H = \frac12$. The fractional SABR formula \eqref{eqn:fSABR} pertains the guiding principle that the lowest order term in the implied volatility expansion is given by the ratio between the absolute value of the logmoneyness and the geodesic distance to the vertical line $F = K$.

The rest of the paper is organized as follows. The fractional SABR model is specified and the bridge representation for joint density is shown in Section \ref{sec:bridge-rep}. Sections \ref{sec:density-exp} and \ref{sec:implied-vol} provide small time asymptotic expansions of the joint density and of the implied volatilities respectively. Section \ref{sec:ldp} presents the bridge representation for finite dimensional distributions and the large deviations principle. Finally, the paper concludes in Section \ref{sec:conclusion} with discussions.

%
%

\section{Model specification} \label{sec:bridge-rep}
Throughout the text, $B=\{B_t, t\geq 0\}$ and $W=\{W_t, t\geq 0\}$ denote two independent standard Brownian motions defined on the filtered probability space $(\Omega,\cF_t,\P)$ satisfying the usual conditions. Let $B^H=\{B_t^H,t\geq 0\}$ be a fractional Brownian motion with Hurst exponent $H \in (0,1)$ generated by $B$ (see \cite{DU}), i.e.,
\[
B^H_t = \int_0^t K_H(t,s) dB_s,
\]
where $K_H$ is the Molchan-Golosov kernel
\begin{equation}
K_H(t,s) = c_H (t-s)^{H-\frac{1}{2}}F\left(H-\frac{1}{2},\frac{1}{2}-H,H+\frac{1}{2};1-\frac{t}{s}\right)\mathbf{1}_{[0,t]}(s), \label{eqn:molchon-golosov}
\end{equation}
with $c_H=\left[\frac{2H\Gamma\left(\frac{3}{2}-H\right)}{\Gamma(2-2H)\Gamma\left(H+\frac12\right)}\right]^{1/2}$ and $F$ is the Gauss hypergeometric function. Also, the autocovariance function of a fractional Brownian motion is denoted by $R(t,s)$ and defined as
\begin{equation}
R(t,s) =\mathbb{E}(B^H_tB^H_s)= \frac12 \left( t^{2H} + s^{2H} - |t-s|^{2H} \right).  \label{eqn:Rts}
\end{equation}
Lastly, we assume that all random variables and stochastic processes are defined on $(\Omega,\cF_t,\P)$.

\subsection{The model}
We study the following lognormal fractional SABR (fSABR) model in a risk neutral probability (for simplicity, interest and dividend rates are both assumed zero in this paper):
\begin{equation}\label{eqn:fSABR-s}
\begin{cases}
 S_t=s_0+\int_0^t\alpha_r S_r (\rho dB_r + \bar\rho dW_r), \\
 \\
 \alpha_t = \alpha_0 e^{\nu B^H_t}, 
 \end{cases}
\end{equation}
where $s_0$ and $\alpha_0$ are the given time zero (current observed) values for the processes $S$ and $\alpha$ respectively, $\rho \in (-1,1)$ and $\bar\rho = \sqrt{1 - \rho^2}$.

 In other words, the underlying price $S_t$ follows a stochastic volatility model with the (instantaneous) volatility process $\alpha_t$, and $\alpha_t$ is given by the exponentiation of a correlated fractional Brownian motion. 
The main purpose of this section is to derive the bridge representations \eqref{eqn:bridge-rep} and \eqref{eqn:bridge-rep-s} for the joint densities of $(S_t,\alpha_t)$. The bridge representation is the crucial starting line in obtaining expansions and approximations of the joint densities to be discussed in Section \ref{sec:density-exp}. 

By making the change of variables
\begin{eqnarray*}
&& X_t = \ln{ S_t}, \qquad Y_t = \alpha_t,
\end{eqnarray*}
the system \eqref{eqn:fSABR-s} can be written more explicitly as 
\begin{equation}\label{eqn:fSABR-x}
\begin{cases}
 X_t = x_0+y_0 \int_0^t e^{\nu B^H_s} (\rho dB_s + \bar\rho dW_s) - \frac{y_0^2}2 \int_0^t e^{2 \nu B^H_s } ds,  \\
 \\
Y_t = y_0 e^{\nu B^H_t},
\end{cases}
\end{equation}
where $x_0=\ln{s_0}$ and $y_0=\alpha_0$.

\subsection{Malliavin calculus with respect to Brownian motion}
We provide some preliminaries on Malliavin calculus with respect to the two Brownian motions $B$ and $W$ in this subsection. We refer the reader to \cite{hu16} and  \cite{N06} for more details.

For any fixed $T>0$, let $\mathbf{H}=L^2([0,T])$ be the separable Hilbert space of all
square integrable real-valued functions on the interval $[0,T]$
with scalar product denoted by $\langle
\cdot,\cdot\rangle_\mathbf{H}$. The norm of an element $h\in
\mathbf{H}$ will be denoted by $\Vert h\Vert_\mathbf{H}$. For any
$h\in \mathbf{H}$, we put $W(h)=\int_0^Th(t)dW_t$ and $B(h)=\int_0^Th(t)d{B}_t$.

For any $m,\, n\in\mathbb{N}$, denote by $C_p^\infty(\mathbb{R}^{m+n})$ the set of all infinitely
 differentiable functions $g: \mathbb{R}^{m+n}\rightarrow
\mathbb{R}$ such that $g$ and all of its partial derivatives have
polynomial growth. We    make use of the notation $\partial_i g=\frac{\partial
g}{\partial x_i}$ whenever $g\in C^1(\mathbb{R}^{m+n})$.

Let $\mathcal{S}$ denote the class of smooth and cylindrical random variables such
that a random variable $F\in \mathcal {S}$ has the form
\begin{equation}\label{smooth}
F=g(W(h_1),\dots,W(h_m),B(k_1),\dots,B(k_n)),
\end{equation}
where $g$ belongs to $C_p^\infty(\mathbb{R}^{m+n})$, $h_1,\dots,h_m$ and $k_1, \dots, k_n$
are in $\mathbf{H}$, and $m, n\in\mathbb{N}$.

For a smooth and cylindrical random variable $F$  of the
form (\ref{smooth}), its Malliavin derivative with respect to $W$  is the $\mathbf{H}$-valued random variable
given by
\[
D_t^1F=\sum_{i=1}^m\partial_i g(W(h_1),\dots,W(h_m),B(k_1),\dots, B(k_n))h_i(t), \, t\in[0,T],
\]
and respectively its Malliavin derivative with respect to $B$ is given by 
\[
D_t^2F=\sum_{i=1}^n\partial_{m+i} g(W(h_1),\dots,W(h_m),B(k_1),\dots, B(k_n))k_i(t), \, t\in[0,T].
\]
For any $p\geq1$, we will denote the domain of $D$ in $L^p(\Omega)$
by $\mathbb{D}^{1,p}$, meaning that $\mathbb{D}^{1,p}$ is the
closure of the class of smooth and cylindrical  random variables $\mathcal{S}$ with
respect to the norm
\[
\Vert F\Vert_{1,p}=\left(\EE|F|^p+\EE\left(\Vert
D^1F\Vert_\mathbf{H}^2+\Vert
D^2F\Vert_\mathbf{H}^2\right)^{\frac{p}2}\right)^{\frac{1}{p}}.
\]

We tailor Theorem 2.1.2 in \cite{N06} to the following lemma which yields a result on the absolute continuity of the law of a random vector with respect to the Lebesgue measure.
\begin{lemma}\label{lem:absolute} Let $F=(F_1,F_2)$ be a random vector in $\mathbb{D}^{1,2}$. If the Malliavin matrix $\gamma := (\langle D^1F_i, D^1F_j\rangle_\mathbf{H}+\langle D^2F_i, D^2F_j\rangle_\mathbf{H})_{1\leq i,j\leq 2}$ of $F$ is invertible a.s.. Then the law of $F$ is absolutely continuous with respect to the Lebesgue measure on $\mathbb{R}^2$. Consequently, the joint density of the random variables $(F_1,F_2)$ exists.
\end{lemma}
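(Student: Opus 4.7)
The plan is to reduce the statement to the standard criterion for absolute continuity of the law of a Malliavin-differentiable random vector driven by a single isonormal Gaussian process (Nualart, Theorem 2.1.2), by combining $W$ and $B$ into one isonormal Gaussian family over a product Hilbert space.

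First I would introduce the direct sum Hilbert space $\widetilde{\mathbf{H}} := \mathbf{H} \oplus \mathbf{H}$ equipped with the inner product $\langle (h_1,k_1),(h_2,k_2)\rangle_{\widetilde{\mathbf{H}}} = \langle h_1,h_2\rangle_{\mathbf{H}} + \langle k_1,k_2\rangle_{\mathbf{H}}$, and define the isonormal Gaussian process $\widetilde{W}$ on $\widetilde{\mathbf{H}}$ by $\widetilde{W}(h,k) := W(h) + B(k)$. The independence of $W$ and $B$ gives $\mathbb{E}[\widetilde{W}(h,k)\widetilde{W}(h',k')] = \langle (h,k),(h',k')\rangle_{\widetilde{\mathbf{H}}}$, so $\widetilde{W}$ is indeed isonormal. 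Then a smooth cylindrical random variable $F$ of the form \eqref{smooth} can be rewritten as $F = g\bigl(\widetilde{W}(h_1,0),\dots,\widetilde{W}(h_m,0),\widetilde{W}(0,k_1),\dots,\widetilde{W}(0,k_n)\bigr)$, which is smooth and cylindrical over $\widetilde{W}$ in the sense of Nualart. Its Malliavin derivative $\widetilde{D}F$ with respect to $\widetilde{W}$ is the $\widetilde{\mathbf{H}}$-valued random variable $\widetilde{D}F = (D^1 F, D^2 F)$, and the associated Sobolev norm coincides with $\|F\|_{1,p}$ defined in the paper.

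Next I would observe that the space $\mathbb{D}^{1,2}$ defined in the paper therefore coincides with the standard Malliavin-Sobolev space $\mathbb{D}^{1,2}(\widetilde{W})$, and that the Malliavin covariance matrix of $F = (F_1,F_2)$ with respect to $\widetilde{W}$ is
\begin{equation*}
\widetilde{\gamma}_{ij} = \langle \widetilde{D}F_i, \widetilde{D}F_j\rangle_{\widetilde{\mathbf{H}}} = \langle D^1 F_i, D^1 F_j\rangle_{\mathbf{H}} + \langle D^2 F_i, D^2 F_j\rangle_{\mathbf{H}} = \gamma_{ij},
\end{equation*}
so the hypothesis that $\gamma$ is invertible a.s.\ is precisely the nondegeneracy hypothesis of Nualart's Theorem 2.1.2 applied to $\widetilde{W}$.

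Finally, invoking Theorem 2.1.2 of \cite{N06} for the isonormal Gaussian process $\widetilde{W}$ and the vector $F=(F_1,F_2) \in \mathbb{D}^{1,2}(\widetilde{W})^2$ with a.s.\ invertible Malliavin matrix $\widetilde\gamma = \gamma$ yields that the law of $F$ is absolutely continuous with respect to Lebesgue measure on $\mathbb{R}^2$, and hence a joint density exists. The only point that needs a little care is the identification of the two functional frameworks (paper's versus Nualart's); once one verifies that cylindrical random variables, Malliavin derivatives, and Sobolev completions line up under the direct-sum construction, the conclusion is immediate. I do not anticipate any real obstacle, since everything reduces to a bookkeeping step that packages the two independent Brownian motions as a single isonormal process.
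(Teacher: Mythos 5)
Your proposal is correct and is exactly the ``tailoring'' the paper has in mind: the paper itself gives no explicit proof of Lemma~\ref{lem:absolute}, merely stating that it is Theorem 2.1.2 of Nualart adapted to the two-Brownian-motion setting, and the direct-sum isonormal construction $\widetilde{\mathbf{H}}=\mathbf{H}\oplus\mathbf{H}$, $\widetilde{W}(h,k)=W(h)+B(k)$, with $\widetilde{D}F=(D^1F,D^2F)$ and $\widetilde\gamma=\gamma$, is precisely the bookkeeping that makes that citation rigorous. No gap; your argument supplies the details the paper leaves implicit.
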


\subsection{Bridge representation for the joint density} In this subsection, we show the existence of the joint density of $(X_t,Y_t)$ for any $t>0$ by using Malliavin calculus. We also give a bridge representation for the joint density by adapting the methodology introduced in Ikeda and Matsumoto \cite{ikeda-matsumoto}. 

\begin{theorem} \label{thm:bridge-rep-p}
For any $t>0$, the law of $(X_t,Y_t)$ satisfying  \eqref{eqn:fSABR-x} is absolutely continuous with respect to the Lebesgue measure on $\mathbb{R}^2$. Moreover, the joint probability density $p(t;x,y)$ of $(X_t, Y_t)$ has the following bridge representation
\bea\label{eqn:bridge-rep}
   && p(t;x,y) \nonumber\\
  &=& \frac1{y\sqrt{2\pi \nu^2 t^{2H}}} e^{-\frac{\left(\ln{(y/y_0)}\right)^2}{2\nu^2 t^{2H}}} \times \nonumber\\
  && \frac1{2\pi} \int_{\mathbb{R}} \Eof{\left. e^{i\left(x - x_0 - \rho y_0 \int_0^t e^{\nu B^H_s} dB_s + \frac{y_0^2v_t}2  \right)\xi} e^{-\frac{\bar\rho^2 y_0^2 v_t\xi^2}2 }\right|B_t^H=\frac{\ln{(y/y_0)}}{\nu}} d\xi.
\eea
where $v_t=\int_0^t e^{2\nu B^H_s} ds$ and  $i = \sqrt{-1}$.
\end{theorem}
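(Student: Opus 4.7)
The plan is to prove the two assertions separately: absolute continuity of the law via Lemma \ref{lem:absolute}, and the bridge representation via a conditioning-and-Fourier-inversion argument.

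For the first part I would compute the Malliavin derivatives of $X_t$ and $Y_t$ with respect to both $W$ and $B$, checking in passing that $(X_t,Y_t)\in\mathbb{D}^{1,2}$ by standard estimates on stochastic integrals. Since $Y_t = y_0 e^{\nu B^H_t}$ depends only on $B$, we have $D^1_r Y_t = 0$ and $D^2_r Y_t = \nu K_H(t,r) Y_t$, while $D^1_r X_t = \bar\rho\, y_0\, e^{\nu B^H_r}$ is read off directly from \eqref{eqn:fSABR-x}. The expression for $D^2_r X_t$ is more involved, but its explicit form turns out to be unnecessary. Writing the Malliavin matrix as $\gamma = M_1 + M_2$ with $M_k := \bigl(\langle D^k F_i, D^k F_j\rangle_{\mathbf{H}}\bigr)_{i,j}$ and $F=(X_t,Y_t)$, both summands are positive semidefinite, so any $v=(a,b)^\top$ in $\ker\gamma$ must annihilate each $M_k$. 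From $v^\top M_1 v = a^2 \bar\rho^2 y_0^2 \int_0^t e^{2\nu B^H_s}\,ds = 0$ one gets $a=0$ (since $\bar\rho>0$ and the integral is strictly positive), and then $v^\top M_2 v = b^2 \nu^2 Y_t^2 t^{2H} = 0$ forces $b=0$. Hence $\gamma$ is invertible a.s., and Lemma \ref{lem:absolute} delivers the joint density.

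For the bridge representation, the key observation is that conditional on $\cF_t^B := \sigma(B_s : s\le t)$, the Wiener integral $\int_0^t e^{\nu B^H_s}\,dW_s$ is Gaussian with mean zero and variance $v_t$, because its integrand is $\cF_t^B$-measurable while $W$ is independent of $B$. This yields the closed-form conditional characteristic function
\begin{equation*}
\mathbb{E}\!\left[ e^{i\xi X_t} \,\Big|\, \cF_t^B \right]
= \exp\!\left\{ i\xi\!\left( x_0 + \rho y_0 \textstyle\int_0^t e^{\nu B^H_s}\,dB_s - \tfrac{y_0^2 v_t}{2} \right) - \tfrac{\bar\rho^2 y_0^2 v_t \,\xi^2}{2} \right\}.
\end{equation*}
Since $B^H_t \mapsto y_0 e^{\nu B^H_t}$ is a bijection, the event $\{Y_t = y\}$ is equivalent to $\{B^H_t = z\}$ with $z := \nu^{-1}\ln(y/y_0)$, and the marginal density of $Y_t$ is precisely the lognormal prefactor appearing in \eqref{eqn:bridge-rep}. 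Factoring $p(t;x,y) = p_{Y_t}(y)\, p_{X_t\mid Y_t = y}(x)$ and applying Fourier inversion to the conditional characteristic function
\begin{equation*}
\mathbb{E}\!\left[ e^{i\xi X_t} \,\Big|\, B^H_t = z \right]
= \mathbb{E}\!\left[ \mathbb{E}\!\left[ e^{i\xi X_t} \mid \cF_t^B\right] \,\Big|\, B^H_t = z \right],
\end{equation*}
obtained by the tower property, produces the target formula up to a harmless sign flip $\xi\mapsto -\xi$ in the $\xi$-integral, which is permissible by the symmetry in $\xi$ of the Gaussian damping factor.

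The main technical obstacle will be justifying the interchange of the $\xi$-integration with the outer conditional expectation given $B^H_t = z$. Conditional on $\cF_t^B$ the $\xi$-integral converges absolutely thanks to $e^{-\bar\rho^2 y_0^2 v_t \xi^2/2}$, but $v_t$ is random and only strictly positive a.s., so a direct Fubini argument is delicate. A clean route is to first perform Fourier inversion inside the inner expectation conditional on $\cF_t^B$ (where $v_t$ is fixed) to obtain the conditional Gaussian density of $X_t$ given $\cF_t^B$, then average this density over $\cF_t^B$ given $B^H_t = z$, and finally rewrite the resulting Gaussian kernel back in Fourier form; Tonelli on the nonnegative integrand under a mild moment bound on $v_t^{-1/2}$ yields the stated representation. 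A secondary point is the existence of a regular conditional distribution given $\{B^H_t = z\}$, which follows from the Gaussianity of $B^H_t$ together with the joint density already established in the first step.
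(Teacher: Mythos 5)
Your proposal is correct, and the second half (conditioning on $\cF_t^B$, computing the conditional characteristic function, then Fourier inverting) is essentially the paper's argument repackaged in terms of characteristic functions rather than conditional Gaussian densities plus the Gaussian Fourier identity; the two are interchangeable.

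The invertibility argument in the first half is where you genuinely diverge, and your route is cleaner. The paper computes $D^2_\theta X_t$ in full and then invokes Cauchy--Schwarz to show $\gamma_{12}^2 < \gamma_{11}\gamma_{22}$ a.s.\ (strictness coming from the fact that $D^2 X_t$ is not proportional to $D^2 Y_t$). You instead split $\gamma = M_1 + M_2$ into the $W$-part and the $B$-part, observe both are positive semidefinite, and kill the kernel coordinatewise: $D^1 Y_t \equiv 0$ makes $v^\top M_1 v = a^2\bar\rho^2 y_0^2\int_0^t e^{2\nu B^H_s}ds$, forcing $a=0$; then $v^\top M_2 v$ collapses to $b^2\nu^2 Y_t^2 t^{2H}$ (using $\int_0^t K_H(t,r)^2\,dr = t^{2H}$), forcing $b=0$. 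This avoids the explicit and rather ugly formula for $D^2 X_t$ entirely, and it replaces the Cauchy--Schwarz strictness argument (which in the paper is stated a bit glibly) with a transparent linear-algebra fact about sums of PSD forms. The one thing the explicit $D^2 X_t$ is still needed for is verifying $(X_t,Y_t)\in\mathbb{D}^{1,2}$, which you acknowledge in passing; the paper is equally brief on that point. You are also more careful than the paper in flagging the Fubini/Tonelli interchange needed to move the $\xi$-integral past the conditional expectation — the paper does this formally — and your suggested fix (invert inside the inner conditional on $\cF_t^B$ first, where $v_t$ is deterministic, then average) is exactly the order of operations the paper actually follows in \eqref{eqn:density-cont}--\eqref{eqn:Fourier}.
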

\begin{remark}
The bridge representation \eqref{eqn:bridge-rep} can be regarded as a generalization of the well-known McKean kernel, namely, the classical heat kernel over a 2-dimensional hyperbolic space. For reader's reference, the McKean kernel $p_{\H^2}(t; x,y)$ reads
\[
 p_{\H^2}(t; x, y) = \frac{\sqrt2 e^{-t/8}}{(2\pi t)^{3/2}} \int_{d}^\infty \frac{\xi e^{-\xi^2/2t}}{\sqrt{\cosh\xi - \cosh d}} d\xi,
\]
where $d = d(x, y; x_0,y_0)$ is the geodesic distance from $(x, y)$ to $(x_0,y_0)$. The geodesic distance satisfies $\cosh d(x, y; x_0,y_0) = \frac{(x - x_0)^2 + y^2 + y_0^2}{2y y_0}$. Note that the McKean kernel is a density with respect to the Riemannian volume form $\frac1{y^2} dx dy$. Indeed, in the case where $H = \frac12$, $\nu =1$ and $\rho = 0$, Ikeda-Matsumoto in \cite{ikeda-matsumoto} showed how to recover the McKean kernel from \eqref{eqn:bridge-rep}. See also Cheng and Wang \cite{cheng-wang} for a different representation in terms of Bessel bridge for the hyperbolic heat kernel. 
\end{remark}
\begin{proof} Notice that we can rewrite  \eqref{eqn:fSABR-x} as 
\begin{equation*}
\begin{cases}
 X_t = x_0+ \int_0^t Y_s (\rho dB_s + \bar\rho dW_s) - \frac{1}2 \int_0^t Y_s^2 ds,  \\
 \\
Y_t = y_0 e^{\nu B^H_t}.
\end{cases}
\end{equation*}
Now we fix any $T\geq t$. Then according to Sections 2.2 and 5.2 in \cite{N06}, the Malliavin derivatives of $X_t$ and $Y_t$ are given as follows
\begin{eqnarray*}
D^1_\theta Y_t&=& 0,\\
D^2_\theta Y_t&=& y_0 \nu e^{\nu B_t^H}K_H(t,\theta) \mathbf{1}_{[0,t]}(\theta)
\end{eqnarray*}
and
\begin{eqnarray*}
D^1_\theta X_t&=&\bar\rho Y_\theta\mathbf{1}_{[0,t]}(\theta)=\bar\rho y_0e^{\nu B_\theta^H}\mathbf{1}_{[0,t]}(\theta), \\
D^2_\theta X_t&=&\left(\rho Y_\theta+\int_\theta^t\rho D^2_\theta Y_sdB_s+\int_\theta^t\bar\rho D^2_\theta Y_sdW_s-\int_\theta^tY_sD_\theta^2Y_s ds\right)\mathbf{1}_{[0,t]}(\theta)\\
&=&\left(\rho y_0e^{\nu B_\theta^H}+\rho y_0\nu \int_\theta^te^{\nu B_s^H}K_H(s,\theta)dB_s+\bar\rho y_0\nu \int_\theta^te^{\nu B_s^H}K_H(s,\theta)dW_s\right)\mathbf{1}_{[0,t]}(\theta)\\
&&-y_0^2\nu\int_\theta^te^{2\nu B_s^H}K_H(s,\theta)ds\, \mathbf{1}_{[0,t]}(\theta).
\end{eqnarray*}
Thus, the Malliavin matrix $\gamma$ of $(X_t,Y_t)$ is given by 
\[
\gamma=\left(\begin{array}{cc}
\gamma_{11}&\gamma_{12}\\
\gamma_{21}&\gamma_{22}
\end{array}\right),
\]
where
\begin{eqnarray*}
\gamma_{11}&=&\int_0^t(D^1_\theta X_t)^2d\theta+\int_0^t(D^2_\theta X_t)^2d\theta\\
&=&\int_0^t\bar\rho^2 y_0^2e^{2\nu B_\theta^H}d\theta+\int_0^t\left(\rho y_0e^{\nu B_\theta^H}+\rho y_0\nu \int_\theta^te^{\nu B_s^H}K_H(s,\theta)dB_s\right.\\
&&\qquad\qquad\qquad\left.+\bar\rho y_0\nu \int_\theta^te^{\nu B_s^H}K_H(s,\theta)dW_s-y_0^2\nu\int_\theta^te^{2\nu B_s^H}K_H(s,\theta)ds\right)^2d\theta,
\end{eqnarray*}
\begin{eqnarray*}
\gamma_{12}=\gamma_{21}&=&\int_0^tD^1_\theta X_tD^1_\theta Y_td\theta+\int_0^tD^2_\theta X_t D^2_\theta Y_td\theta\\
&=&y_0\nu e^{\nu B_t^H}\int_0^tK_H(t,\theta)\left(\rho y_0e^{\nu B_\theta^H}+\rho y_0\nu \int_\theta^te^{\nu B_s^H}K_H(s,\theta)dB_s\right.\\
&&\quad\left.+\bar\rho y_0\nu \int_\theta^te^{\nu B_s^H}K_H(s,\theta)dW_s-y_0^2\nu\int_\theta^te^{2\nu B_s^H}K_H(s,\theta)ds\right)d\theta,
\end{eqnarray*}
and 
\begin{eqnarray*}
\gamma_{22}&=&\int_0^t(D^1_\theta Y_t)^2d\theta+\int_0^t(D^2_\theta Y_t)^2d\theta\\
&=&y_0^2\nu^2e^{2\nu B_t^H}\int_0^tK_H(t,\theta)^2d\theta.
\end{eqnarray*}
Then it follows from the Cauchy-Schwarz inequality that almost surely
\begin{eqnarray*}
\gamma_{12}^2&<&y_0^2\nu^2e^{2\nu B_t^H}\int_0^tK_H(t,\theta)^2d\theta\times\\
&&\int_0^t\left(\rho y_0e^{\nu B_\theta^H}+\rho y_0\nu \int_\theta^te^{\nu B_s^H}K_H(s,\theta)dB_s\right.\\
&&\qquad\qquad\qquad\left.+\bar\rho y_0\nu \int_\theta^te^{\nu B_s^H}K_H(s,\theta)dW_s-y_0^2\nu\int_\theta^te^{2\nu B_s^H}K_H(s,\theta)ds\right)^2d\theta\\
&\leq&\gamma_{22}\cdot\gamma_{11},
\end{eqnarray*}
which implies that the Malliavin matrix $\gamma$ is invertible a.s.. Hence, by Lemma \ref{lem:absolute} the law of $(X_t,Y_t)$ is absolutely continuous with respect to the Lebesgue measure on $\mathbb{R}^2$.

Next, we calculate the joint probability density $p(t;x,y)$ of $(X_t,Y_t)$ as follows. For any bounded and continuous function $f$ defined on $\mathbb{R}^2$, we have
\bea\label{eqn:density-init}
  && \Eof{f(X_t,Y_t)} \nonumber\\
  &=& \Eof{f\left(x_0 + y_0 \int_0^t e^{\nu B^H_s} (\rho dB_s + \bar\rho dW_s) - \frac{y_0^2v_t}2  , y_0 e^{\nu B^H_t}  \right)}.
\eea
Note that conditioned on $\F^B_t$, $y_0 \bar\rho \int_0^t e^{\nu B^H_s} dW_s$ is normally distributed since $W$ and $B^H$ are independent. Moreover,
\beaa
  && \Eof{\left. y_0 \bar\rho \int_0^t e^{\nu B^H_s} dW_s \right| \F^B_t} = 0, \\
  && \Eof{\left. \left( y_0 \bar\rho \int_0^t e^{\nu B^H_s} dW_s \right)^2 \right|\F^B_t} = y_0^2 \bar\rho^2 \int_0^t e^{2\nu B^H_s} ds = y_0^2 \bar\rho^2 v_t.
\eeaa
From \eqref{eqn:density-init}, it follows  by conditioning on $\F^B_t$ that 
\bea\label{eqn:density-cont}
&& \Eof{f(X_t,Y_t)} \nonumber\\
  &=& \Eof{\E\left[\left.f\left(x_0 + y_0 \bar\rho \int_0^t e^{\nu B^H_s} dW_s + y_0 \rho \int_0^t e^{\nu B^H_s} dB_s - \frac{y_0^2 v_t}2, y_0 e^{\nu B^H_t} \right) \right|\F^B_t\right]} \nonumber\\
  &=& \Eof{\int\left\{f\left(x_0 + \xi + y_0 \rho \int_0^t e^{\nu B^H_s} dB_s - \frac{y_0^2v_t}2 , y_0 e^{\nu B^H_t} \right) \frac{e^{-\frac{\xi^2}{2 y_0^2\bar\rho^2 v_t}}}{\sqrt{2\pi y_0^2\bar\rho^2 v_t}} \right\}d\xi}\nonumber \\
 &=& \Eof{\int\left\{\frac1{\sqrt{2\pi y_0^2\bar\rho^2 v_t}}f\left(x, y_0 e^{\nu B^H_t} \right) e^{-\frac{\left(x - x_0 - y_0 \rho \int_0^t e^{\nu B^H_s} dB_s + \frac{y_0^2v_t}2  \right)^2}{2 y_0^2\bar\rho^2 v_t}} \right\}dx} \nonumber\\  
 &=&\int_{\mathbb{R}^2}f(x,y)\Eof{\frac1{\sqrt{2\pi y_0^2\bar\rho^2 v_t}}\left.e^{-\frac{\left(x- x_0 - y_0 \rho \int_0^t e^{\nu B^H_s} dB_s + \frac{y_0^2 v_t}2 \right)^2}{2 y_0^2\bar\rho^2 v_t}}\right|B_t^H=\frac{\ln{(y/y_0)}}{\nu}}\nonumber\\
 &&\qquad\qquad\qquad \times \frac{1}{y\sqrt{2\pi \nu^2t^{2H}}}e^{-\frac{\left(\ln{y}-\ln{y_0}\right)^2}{2\nu^2t^{2H}}}dx\, dy.
\eea
By using the identity 
\beaa
  && e^{-\frac{u^2}{2y_0^2 \bar\rho^2 v_t}} = \sqrt{\frac{y_0^2 \bar\rho^2 v_t}{2\pi}} \int_{\mathbb{R}} e^{iu\xi} e^{-\frac{y_0^2 \bar\rho^2 v_t \xi^2}2} d\xi
\eeaa
and letting $u = x - x_0 - \rho y_0 \int_0^t e^{\nu B^H_s} dB_s + \frac{y_0^2v_t}2 $, we have
\bea\label{eqn:Fourier}
&& \frac1{\sqrt{2\pi y_0^2 \bar\rho^2 v_t}} e^{-\frac1{2 y_0^2 \bar\rho^2 v_t}{\left(x- x_0 - y_0 \rho \int_0^t e^{\nu B^H_s} dB_s + \frac{y_0^2v_t }2 \right)^2}} \nonumber\\
&=& \frac1{2\pi} \int e^{i\left(x- x_0 - \rho y_0 \int_0^t e^{\nu B^H_s} dB_s + \frac{y_0^2 v_t}2 \right)\xi} e^{-\frac{y_0^2 \bar\rho^2 v_t \xi^2}2} d\xi.
\eea
Plugging \eqref{eqn:Fourier} into the right-hand side of \eqref{eqn:density-cont}, we get
\beaa
&& \Eof{f(X_t,Y_t)} \nonumber\\
&=&  \frac{1}{y\sqrt{2\pi \nu^2t^{2H}}}\frac1{2\pi}\int_{\mathbb{R}^2}f(x,y)e^{-\frac{\left(\ln{(y/y_0)}\right)^2}{2\nu^2t^{2H}}} \times\nonumber\\
&&\ \int_{\mathbb{R}}\Eof{\left.e^{i\left(x- x_0 - \rho y_0 \int_0^t e^{\nu B^H_s} dB_s + \frac{y_0^2v_t}2  \right)\xi} e^{-\frac{y_0^2 \bar\rho^2 v_t \xi^2}2} \right|B_t^H=\frac{\ln{(y/y_0)}}{\nu}}d\xi\,dx\,dy.
\eeaa
Finally, we end up with the following bridge representation of the density \eqref{eqn:bridge-rep}.
\end{proof}

By transforming back to the original variables $(s, a) = ( e^{x}, y)$, we obtain a bridge representation for the joint density $q(t;s,a)$ of $(S_t, \alpha_t)$ in \eqref{eqn:fSABR-s}.  
\begin{cor}
The joint density $q(t;s,a)$ of the lognormal fractional SABR model \eqref{eqn:fSABR-s} has the following bridge representation
\bea
  q(t;s, a) \label{eqn:bridge-rep-s}   &=& \frac{e^{-\frac{(\ln( a/a_0))^2}{2\nu^2 t^{2H}}}}{ a\sqrt{2\pi \nu^2 t^{2H}}}\, \frac1{2\pi s}\\
&&\times \int_{\mathbb{R}} \left(\frac{s}{s_0}\right)^{i\xi} \Eof{\left. e^{i\left(-\rho \int_0^t a_0 e^{\nu B^H_s} dB_s + \frac{a_0^2v_t}2  \right)\xi} e^{-\frac{\bar\rho^2 a_0^2 v_t}2 \xi^2}\right|B^H_t=\frac{\ln (a/a_0)}{\nu}} d\xi. \nonumber
\eea
\end{cor}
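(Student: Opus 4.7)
The plan is to view this corollary as a pure change-of-variables consequence of Theorem \ref{thm:bridge-rep-p}. Recall that $X_t = \ln S_t$ and $Y_t = \alpha_t$, so the map $(x,y) \mapsto (s,a) = (e^x, y)$ is a smooth diffeomorphism from $\mathbb{R}\times(0,\infty)$ onto $(0,\infty)\times(0,\infty)$ with Jacobian determinant $e^x = s$. Since the joint law of $(X_t, Y_t)$ is absolutely continuous by Theorem \ref{thm:bridge-rep-p}, the push-forward under this diffeomorphism gives
\[
q(t; s, a) \;=\; \frac{1}{s}\, p\!\left(t; \ln s,\, a\right),
\]
with $s_0 = e^{x_0}$ and $a_0 = y_0$.

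Next I would simply substitute $x = \ln s$ and $y = a$ into the bridge representation \eqref{eqn:bridge-rep}. The prefactor $\frac{1}{y\sqrt{2\pi\nu^2 t^{2H}}}\, e^{-(\ln(y/y_0))^2/(2\nu^2 t^{2H})}$ becomes $\frac{1}{a\sqrt{2\pi\nu^2 t^{2H}}}\, e^{-(\ln(a/a_0))^2/(2\nu^2 t^{2H})}$, and conditioning on $B^H_t = \ln(y/y_0)/\nu$ is identical to conditioning on $B^H_t = \ln(a/a_0)/\nu$. The only term that changes nontrivially is the oscillatory factor $e^{i(x - x_0)\xi}$ inside the $\xi$-integral. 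Writing $x - x_0 = \ln(s/s_0)$ and absorbing the factor $\frac{1}{s}$ coming from the Jacobian yields exactly \eqref{eqn:bridge-rep-s}, using the identity
\[
e^{i(\ln s - \ln s_0)\xi} \;=\; \left(\frac{s}{s_0}\right)^{i\xi}.
\]

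There is essentially no obstacle here; the only thing to verify is that Fubini's theorem and the change of variables can be applied legitimately after the Fourier factorisation, which is guaranteed because the computation in \eqref{eqn:density-cont}--\eqref{eqn:Fourier} was already carried out against an arbitrary bounded continuous test function $f$. Replacing $f(x,y)$ with $f(e^x, y)$ (bounded and continuous on $\mathbb{R}\times(0,\infty)$ whenever the test function on $(s,a)$ is bounded and continuous) and changing variables to $(s,a)$ in the outer integral produces the claimed density with respect to Lebesgue measure on $(0,\infty)\times(0,\infty)$.
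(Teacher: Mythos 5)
Your proof is correct and takes essentially the same approach as the paper, which simply states the corollary follows "by transforming back to the original variables $(s,a) = (e^x, y)$" without further elaboration. The Jacobian computation, the substitution $x = \ln s$, and the rewriting $e^{i(\ln s - \ln s_0)\xi} = (s/s_0)^{i\xi}$ are exactly the steps implicitly invoked there.
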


%
%

\section{Expansion around deterministic path} \label{sec:density-exp}
To gain more intuition and in particular a more practical form for applications in obtaining approximations of implied volatility, this section is devoted to deriving an expansion to the lowest order of the bridge representation \eqref{eqn:bridge-rep} around a properly chosen deterministic path.
The expansion will be shown useful in deriving a small time approximation for implied volatility in Section \ref{sec:implied-vol}.

Recall that the joint density $p$ of $(X_t, Y_t)$ has the representation given in \eqref{eqn:bridge-rep} as
\beaa
  && p(t;x,y) \nonumber\\
  &=& \frac1{y\sqrt{2\pi \nu^2 t^{2H}}} e^{-\frac{\left(\ln{(y/y_0)}\right)^2}{2\nu^2 t^{2H}}} \times \nonumber\\
  && \frac1{2\pi} \int_{\mathbb{R}} \Eof{\left. e^{i\left(x - x_0 - \rho y_0 \int_0^t e^{\nu B^H_s} dB_s + \frac{y_0^2v_t}2  \right)\xi} e^{-\frac{\bar\rho^2 y_0^2 v_t\xi^2}2 }\right|B_t^H=\frac{\ln{(y/y_0)}}{\nu}} d\xi.\eeaa
Let us start with a few na\"ive calculations as follows. We expand the above conditional expectation around the deterministic path $m_s$, for $0 \leq s \leq t$, that is determined by the conditional expectation of $B_s^H$ given its terminal point $B_t^H=\frac{\ln (y/y_0)}{\nu}$. Precisely,
\[
m_s := \mathbb{E}\left[B^H_s\left|B_t^H=\frac{\ln{(y/y_0)}}{\nu} \right.\right] = R\left(1, \frac st\right)\frac{\ln{(y/y_0)}}\nu,
\]
where $R$ is defined in \eqref{eqn:Rts}. By Taylor's expansion,  we have, for $n \geq 0$,
\beaa
&& e^{-i\rho\xi \int_0^t y_0 e^{\nu B^H_s} dB_s} e^{-\frac 12 (\bar\rho^2\xi - i) \xi \int_0^t y_0^2 e^{2 \nu B^H_s} ds} \\
&\approx& e^{-i\rho\xi \int_0^t y_0 e^{\nu m_s} dB_s} e^{-\frac12 (\bar\rho^2\xi - i) \xi \int_0^t y_0^2 e^{2 \nu m_s} ds} \times \\
&& \sum_{k, \ell=0}^n \frac{(- i\rho\xi)^k}{k!} \left\{ \int_0^t y_0 \left(e^{\nu B^H_s} - e^{\nu m_s}\right) dB_s \right\}^k
\times \\
&&
\frac1{\ell!}\left\{-\frac12 (\bar\rho^2\xi - i) \xi \int_0^t y_0^2 \left( e^{2 \nu B^H_s} - e^{2 \nu m_s} \right)ds \right\}^\ell .
\eeaa
Thus, even for obtaining a na\"ive expansion, we shall need a systematic way of computing the conditional expectations of the form, for either $k\geq 1$ or $\ell \geq 1$,
\[
\Eof{\left. e^{-i\rho\xi \int_0^t y_0 e^{\nu m_s} dB_s} \left\{ \int_0^t \left(e^{\nu B^H_s} - e^{\nu m_s}\right) dB_s \right\}^k \left\{\int_0^t \left( e^{2 \nu B^H_s} - e^{2 \nu m_s} \right)ds \right\}^\ell\right|B_t^H=\frac{\ln{(y/y_0)}}{\nu}},
\]
which is pretty complicated if not impossible. Nevertheless, as leading order is concerned, small time expansion of the joint density $p$ to the lowest order (i.e., $k=\ell=0$) is still manageable. The result is summarized in the following theorem.

In the following sequel, for simplification of the notation, we use $\E_{\frac\eta\nu}[\cdot]$ to denote $\Eof{\left.\cdot\right|B_t^H=\frac\eta\nu}$, where $\eta=\ln(y/y_0)$. A function $g$ is denoted by $g(t) = O(t^a)$ as $t \to 0^+$ if it satisfies
\[
 \limsup_{t\to 0^+} \frac{|g(t)|}{t^a}< \infty.
\]
\begin{theorem} \label{thm:exp-rho-nzero}
The joint probability density $p$ of the process $(X_t, Y_t)$ satisfying \eqref{eqn:fSABR-x} has the following asymptotic to the lowest order
\bea\label{eqn:small-time-p}
&& p(t;x,y)  \\
&=& \frac1{2\pi} \, \frac1{y \sqrt{\nu^2 t^{2H}}} e^{-\frac{\eta^2}{2 \nu^2t^{2H}}} \, \frac1{y_0\sqrt{t \psi(\eta)}} e^{-\frac1{2y_0^2\psi(\eta)}\left(\frac{x - x_0}{\sqrt t} +\frac{y_0^2\sqrt{t}}2C_{eR}(\eta)- \rho y_0t^{- H} C_{RK}(\eta) \frac\eta\nu  \right)^2} \left(1 + O\left(\sqrt t \right) \right), \nonumber
\eea
where
\beaa
&& C_{RK}(\eta) := \int_0^1 e^{R(1,u)\frac\eta\nu} K_H(1,u) du, \label{eqn:CRK} \\
&& C_{eR}(\eta) := \int_0^1 e^{2 R(1,u)\frac\eta\nu} du, \label{eqn:CeR} \\
&& \psi(\eta):= C_{eR}(\eta) - \rho^2 C_{RK}^2(\eta).\label{eqn:vtilde}
\eeaa
\end{theorem}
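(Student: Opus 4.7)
The plan is to substitute the random path $B^H_s$ appearing inside the conditional expectation by the deterministic bridge mean $m_s$ in all smooth functional dependence, then to evaluate the remaining conditional expectation (which becomes Gaussian in the residual Brownian randomness of $\int_0^t e^{\nu m_s}dB_s$), and finally to carry out the Gaussian Fourier integral in $\xi$. The leading-order formula in Theorem \ref{thm:exp-rho-nzero} should emerge essentially by a completion of the square.

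First I would replace $v_t=\int_0^t e^{2\nu B^H_s}ds$ by its frozen analogue $\int_0^t e^{2\nu m_s}ds$. The change of variable $s=tu$, combined with $\nu m_{tu}=R(1,u)\eta$ (since $m_{tu}=R(1,u)\eta/\nu$), yields $\int_0^t e^{2\nu m_s}ds=tC_{eR}(\eta)$. Next, I would argue that the stochastic integral $A_t:=\int_0^t e^{\nu B^H_s}dB_s$, to leading order, may be replaced by $\int_0^t e^{\nu m_s}dB_s$; crucially, the latter is \emph{still} random (jointly Gaussian with the conditioning variable $B^H_t=\int_0^t K_H(t,s)dB_s$). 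Using the Gaussian regression formula with $f(s)=e^{\nu m_s}$ and $g(s)=K_H(t,s)$, together with the scaling $K_H(t,tu)=t^{H-1/2}K_H(1,u)$, I compute
\[
\Eof{\left.\textstyle\int_0^t e^{\nu m_s}dB_s\right|B^H_t=\eta/\nu}=\frac{\int_0^t e^{\nu m_s}K_H(t,s)ds}{t^{2H}}\cdot\frac{\eta}{\nu}=t^{\frac12-H}C_{RK}(\eta)\,\frac{\eta}{\nu},
\]
and an analogous calculation gives the conditional variance $t(C_{eR}(\eta)-C_{RK}^2(\eta))$.

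Plugging these into the bridge representation collapses the conditional expectation to a Gaussian characteristic function evaluated at $-\rho y_0\xi$, producing an integrand in $\xi$ of the form $\exp(i\xi\Phi(t,x,y))\exp(-\tfrac12 y_0^2 t\Psi(\eta)\xi^2)$ where the $\xi^2$-coefficient is $\bar\rho^2 C_{eR}+\rho^2(C_{eR}-C_{RK}^2)=\psi(\eta)$ and the linear coefficient is $\Phi=x-x_0+\tfrac{y_0^2 t C_{eR}}{2}-\rho y_0 t^{1/2-H}C_{RK}\eta/\nu$. A standard Gaussian Fourier integral in $\xi$ and scaling out $\sqrt t$ from $\Phi$ to form $\frac{x-x_0}{\sqrt t}+\frac{y_0^2\sqrt t}{2}C_{eR}-\rho y_0 t^{-H}C_{RK}\eta/\nu$ reproduces the exponent and prefactor in \eqref{eqn:small-time-p}; the constant factors $\frac{1}{2\pi}$, $\frac{1}{y\sqrt{\nu^2 t^{2H}}}$, and $\frac{1}{y_0\sqrt{t\psi(\eta)}}$ appear from the bridge prefactor and from $\sqrt{2\pi/(y_0^2 t\psi)}$ respectively.

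The main obstacle is controlling the error of the path-freezing approximation in the $O(\sqrt t)$ remainder. Under the conditioning, the fluctuation $B^H_s-m_s$ is a centered Gaussian bridge of scale $t^H$, so each Taylor term of the form
\[
\textstyle\left\{\int_0^t(e^{\nu B^H_s}-e^{\nu m_s})dB_s\right\}^k\left\{\int_0^t(e^{2\nu B^H_s}-e^{2\nu m_s})ds\right\}^\ell
\]
is of size $O(t^{(k+\ell)H}\cdot t^{(k/2+\ell)})$ after accounting for the $dB$ and $ds$ scales, and the oscillatory factors $(i\xi)^{k+\ell}$ must be absorbed into the Gaussian damping $e^{-\bar\rho^2 y_0^2 v_t\xi^2/2}$ uniformly in $\xi$. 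The delicate point is that all these higher-order cross-terms, once integrated against the Gaussian weight in $\xi$, decay at least like $\sqrt t$, yielding the stated $(1+O(\sqrt t))$ remainder. A careful scaling argument using the $t^H$ variance of the bridge together with uniform-in-$\xi$ bounds on the conditional moments of the iterated stochastic integrals should suffice; this is the step where all the bookkeeping lives.
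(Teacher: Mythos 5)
Your proposal is correct and follows essentially the same approach as the paper: freeze the path at the conditional mean $m_s$, use Gaussian regression on the jointly Gaussian pair $\big(\int_0^t e^{\nu m_s}dB_s,\,B_t^H\big)$ with the self-similarity scalings $V(Y)=t^{2H}$, $\cov(X,Y)=t^{H+\frac12}C_{RK}(\eta)$, $V(X)=tC_{eR}(\eta)$, and then evaluate the resulting Gaussian integral in $\xi$. The error estimate you sketch by $t^H$-bridge-fluctuation scaling is precisely what the paper carries out in detail in Section \ref{sec:error-estimate} using $|e^z-e^w|\leq(e^{\Re z}+e^{\Re w})|z-w|$, H\"older, and Burkholder--Davis--Gundy bounds.
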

\begin{proof} 
To the lowest order, $p$ is given by
\bea\label{eqn:bridge-rep-0th} 
  && p(t;x,y)\nonumber \\
  &=& \frac{e^{-\frac{\eta^2}{2 \nu^2t^{2H}}}}{ y\sqrt{2\pi \nu^2 t^{2H}}}\, \frac1{2\pi} \int_\mathbb{R} e^{i(x- x_0)\xi} e^{-\frac12 (\bar\rho^2\xi - i) \xi \int_0^t y_0^2 e^{2 \nu m_s} ds} \E_{\frac\eta\nu}\left[e^{-i\rho\xi \int_0^t y_0 e^{\nu m_s} dB_s} \right] d\xi. 
\eea
We consider the conditional expectation in the above expression.
Note that $\int_0^t e^{\nu m_s} dB_s$ and $B_t^H$ are jointly Gaussian. We  apply the following identity to evaluate the conditional expectation: if $X$ and $Y$ are jointly normal with mean 0, we can decompose $X$ as
\[
X = \frac{\cov(X,Y)}{V(Y)} Y + \sqrt{\frac{V(X)V(Y) - \cov(X,Y)^2}{V(Y)}}Z,
\]
where $Y$ and $Z$ are independent and $Z$ is standard normal. Hence,
\beaa
\Eof{f(X)|Y = y} = \Eof{f\left( \frac{\cov(X,Y)}{V(Y)} y + \sqrt{\frac{V(X)V(Y) - \cov(X,Y)^2}{V(Y)}}Z \right)}.
\eeaa
In our case, $X = \int_0^t e^{\nu m_s} dB_s$ and $Y = B_t^H$, hence
\beaa
&& V(X) = \int_0^t e^{2\nu m_s} ds = t \, \int_0^1 e^{2R(1,u)\eta} du = C_{eR}(\eta) \, t, \\
&& V(Y) = t^{2H}, \\
&& \cov(X,Y) = \int_0^t e^{\nu m_s} K_H(t,s) ds = t^{H + \frac12} \, \int_0^1 e^{R(1,u)\eta} K_H(1,u) du = C_{RK}(\eta)\, t^{H + \frac12}.
\eeaa
Therefore,
\beaa
&& \E_{\frac\eta\nu} \left[ e^{-i\rho\xi \int_0^t y_0 e^{\nu m_s} dB_s} \right]\nonumber \\
&=& e^{-i\rho\xi y_0 t^{\frac12 - H}\, C_{RK}(\eta)\frac\eta\nu} \E\left[ e^{-i\rho\xi y_0 \left\{\sqrt t \, \sqrt{C_{eR}(\eta) - C_{RK}^2(\eta)} \right\}Z} \right] \nonumber\\
&=& \exp\left[-i\rho\xi y_0 t^{\frac12 - H} C_{RK}(\eta)\frac\eta\nu - \frac{\rho^2\xi^2y_0^2 t }2 \left\{C_{eR}(\eta) - C_{RK}^2(\eta) \right\}\right].
\eeaa


Thus, by substituting the above expression into \eqref{eqn:bridge-rep-0th}, we obtain
\bea\label{sep-s-3-6}
&& p(t;x,y)\nonumber \\
&=&\frac{1}{2\pi} \frac1{y \sqrt{\nu^2 t^{2H}}} e^{-\frac{\eta^2}{2 \nu^2 t^{2H}}} \times\nonumber \\
&& \frac1{\sqrt{2\pi}}\int_\mathbb{R} e^{i(x - x_0)\xi} e^{-\frac12 (\bar\rho^2\xi - i) \xi t y_0^2 C_{eR}(\eta)}\,  e^{-i\rho\xi y_0 t^{\frac12 - H} C_{RK}(\eta) \frac\eta\nu - \frac{\rho^2\xi^2y_0^2 t }2 \left\{C_{eR}(\eta) - C_{RK}^2(\eta) \right\}}
\, d\xi \nonumber\\
&=&\frac{1}{2\pi}  \frac1{y\sqrt{ \nu^2 t^{2H}}} e^{-\frac{\eta^2}{2\nu^2  t^{2H}}} \times \nonumber\\
&& \frac1{\sqrt{2\pi}} \int_\mathbb{R} e^{i\left(x- x_0 + \frac{y_0^2t}2 C_{eR}(\eta)- \rho y_0 t^{\frac12 - H} C_{RK}(\eta)\frac\eta\nu   \right)\xi} \, e^{-\frac{\xi^2y_0^2t}2  \left\{C_{eR}(\eta) - \rho^2 C_{RK}^2(\eta) \right\}}
 \, d\xi\nonumber \\
&=& \frac1{2\pi} \, \frac1{y \sqrt{\nu^2 t^{2H}}} e^{-\frac{\eta^2}{2 \nu^2t^{2H}}} \, \frac1{y_0\sqrt{t \psi(\eta)}} e^{-\frac1{2y_0^2\psi(\eta)}\left(\frac{x - x_0}{\sqrt t} +\frac{y_0^2\sqrt{t}}2C_{eR}(\eta)- \rho y_0t^{- H} C_{RK}(\eta) \frac\eta\nu  \right)^2}.
\eea
We postpone the detailed error analysis to Section \ref{sec:error-estimate} in the appendix.
\end{proof}
\begin{remark}
We remark that in the logarithmic scale, \eqref{eqn:small-time-p} can be expressed in a more concise form as
\beaa
&& \ln p(t; x, y) \\
&=&-\frac1{2 t^{2H}}\left[\frac{\eta^2}{\nu^2} + \frac1{y_0^2 \psi(\eta)}\left(\frac{x - x_0}{t^{\frac12 - H}}+\frac{y_0^2t^{H+\frac12}}2C_{eR}(\eta) - \rho y_0 C_{RK}(\eta) \frac\eta\nu \right)^2 \right]+O(\ln t).
\eeaa
\end{remark}

\begin{remark}
In the case that $\nu = 1$, $\rho = 0$, and $H = \frac12$, we have
\[
C_{eR}(\eta) = \int_0^1 e^{2R(1,u)\eta} du = \int_0^1 e^{2u\eta} du = \frac1{2\eta}(e^{2\eta} - 1) = \frac{y^2 - y_0^2}{2\eta y_0^2}.
\]
Then \eqref{eqn:small-time-p} reduces to
\bea
&& \frac1{2\pi} \times \frac1{y\sqrt{t}} e^{-\frac{\eta^2}{2 t}} \times \frac1{y_0 \sqrt{C_{eR}(\eta)t}} e^{-\frac1{2y_0^2 C_{eR}(\eta) t  }\left(x - x_0 + \frac{y_0^2t}2 C_{eR}(\eta)  \right)^2} \nonumber \\
&=& \frac1{2\pi} \, \frac1{y \sqrt{t}} e^{-\frac{\eta^2}{2 t}} \,
\frac1{y_0 \sqrt{C_{eR}(\eta)t}}
e^{-\frac{\left(x - x_0\right)^2}{2y_0^2 C_{eR}(\eta) t}}
e^{-\frac{x - x_0}{2}} (1 + O(t)) \nonumber \\
&=& \frac1{2\pi t} \, e^{-\frac1{2t}\left[\eta^2 + \frac{2\eta \left(x - x_0\right)^2}{y^2 - y_0^2}\right]}
\frac{e^{-\frac{x - x_0}{2}}}{y y_0 \sqrt{C_{eR}(\eta)}} (1 + O(t)). \label{eqn:hyperbolic-hk-approx}
\eea
Notice that in this case $(X_t, Y_t)$ represents the Brownian motion in hyperbolic plane whose transition density $p_\mathbb{H}$ (with respect to the Riemannian area measure) has the leading term in small time asymptotic as
\[
p_\mathbb{H}(t;x, y) = \frac1{2\pi t} e^{-\frac{d^2(x,y;x_0,y_0)}{2t}} (1 + O(t)),
\]
where $d$ denotes the geodesic distance between $(x,y)$ and $(x_0,y_0)$ in the hyperbolic plane. For reader's reference, the hyperbolic cosine of the geodesic distance $d(x,y;x_0,y_0)$ has the closed form expression
\[
\cosh d(x, y; x_0, y_0) = \frac{(x - x_0)^2 + y_0^2 + y^2}{2 y_0 y}.
\]
Thus, in a sense the following function in \eqref{eqn:hyperbolic-hk-approx}
\[
\tilde d(x,y; x_0, y_0)
:= \sqrt{\eta^2 + \frac{2 \eta}{y^2 - y_0^2} \, (x - x_0)^2}
\]
can be regarded as an approximation of the hyperbolic geodesic distance. The complete recovery of the hyperbolic geodesic distance is demonstrated in Section \ref{sec:ldp} below.
\end{remark}
\section{Small time approximation of option price and implied volatility} \label{sec:implied-vol}

We derive in this section the small time asymptotics of the premium of a call option and its associated implied volatility by applying the small time asymptotics for the probability density obtained in Section \ref{sec:density-exp} when $H \leq \frac12$.
It is documented, for exmaple, in Ekstr\"om and Lu \cite{ekstrom-lu}, that if the underlying asset is governed by an exponential L\'evy model, the induced implied volatilities of non ATM options may explode if jumps exist and the underlying process jumps towards the strike. As we shall see in the following, when $H < \frac12$, the small time approximation of implied volatility also explodes; creating a jump like behavior in the underlying process.

Let $k = \ln K$ be the logmoneyness, $t$ the time to expiry, and recall that $S_t =  e^{X_t}$. Though equivalently, we shall be primarily working with the $(X_t, Y_t)$ process as in \eqref{eqn:fSABR-x} rather than the $(S_t, \alpha_t)$ process in \eqref{eqn:fSABR-s} hereafter. We write the price $C$ of a call as a function of $k$ and $t$ as
\beaa
C(k,t) &:=& \Eof{\left(S_t - K\right)^+} =  \Eof{(e^{X_t} - e^k)^+} \\
&=&  \iint (e^{x} - e^k)^+ p(t; x, y) dx\, dy.
\eeaa
To evaluate the last integral, we approximate the joint density $p$ by the small time asymptotics obtained in Theorem \ref{thm:exp-rho-nzero}, then, as $t \to 0^+$, apply Laplace asymptotic formula to the resulting integral. For reader's convenience, we provide with proof in Secton \ref{sec:laplace-formula} a variation of the Laplace asymtotic formula that is tailored for our own use.


\begin{lemma} \label{lma:logC-small-time} Let $H\leq \frac12$.
For out-of-money call options, i.e., $k > x_0$, the call price $C(k,t)$ has the following asymptotic as $t \to 0$
\bea
\ln C(k,t) \approx -\frac1{2t^{2H}} \left\{\frac{\eta_*^2}{\nu^2}+ \frac1{y_0^2 \psi(\eta_*)}\left(\frac {k-x_0}{t^{\frac12 - H}} - \rho y_0 C_{RK}(\eta_*) \frac{\eta_*}{\nu} \right)^2\right\}, \label{eqn:logC-asymp}
\eea
where $\eta_*$ is the minimizer
\[
\eta_* = \argmin\left\{ \eta \in \R: \frac{\eta^2}{\nu^2} + \frac1{y_0^2 \psi(\eta)}\left(\frac{ k-x_0}{t^{\frac12 - H}} - \rho y_0 C_{RK}(\eta)\frac\eta\nu\right)^2 \right\}.
\]
\end{lemma}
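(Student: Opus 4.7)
The strategy is to plug the small-time asymptotic for $p(t;x,y)$ from Theorem \ref{thm:exp-rho-nzero} into
\[
C(k,t) = \iint_{x > k} (e^x - e^k)\, p(t;x,y)\, dx\, dy,
\]
change variables $y = y_0 e^\eta$ so that the $y$-integration becomes an integration over $\eta \in \R$, and apply the Laplace asymptotic formula tailored in Section \ref{sec:laplace-formula} to the resulting double integral. After this substitution the integrand factors into a polynomial-in-$t$ prefactor multiplied by $\exp\{-\Phi(t,x,\eta)/(2 t^{2H})\}$, where
\[
\Phi(t,x,\eta) = \frac{\eta^2}{\nu^2} + \frac{1}{y_0^2 \psi(\eta)}\left(\frac{x - x_0}{t^{\frac12 - H}} + \frac{y_0^2 t^{H+\frac12}}{2} C_{eR}(\eta) - \rho y_0 C_{RK}(\eta)\frac{\eta}{\nu}\right)^2.
\]
Since the multiplicative $(1 + O(\sqrt t))$ error in Theorem \ref{thm:exp-rho-nzero} contributes only $O(\sqrt t)$ to $\ln C(k,t)$, which is negligible compared to $t^{-2H}$, the leading behavior of $\ln C(k,t)$ equals $-1/(2t^{2H})$ times the minimum of $\Phi(t,\cdot,\cdot)$ over $\{(x,\eta) : x \geq k\}$.

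The next step is to identify this constrained minimum. For fixed $\eta$, the function $\Phi(t,\cdot,\eta)$ is a convex quadratic in $x$ whose unconstrained minimizer $x_c(\eta,t)$ satisfies $x_c - x_0 = O(t^{\frac12 - H})$ as $t \to 0^+$. Under the hypothesis $H \leq \frac12$ and the out-of-the-money assumption $k > x_0$, this forces $x_c(\eta,t) < k$ for all sufficiently small $t$, so the constrained minimum is attained on the boundary $x = k$. Setting $x = k$ and dropping the $O(t^{H + \frac12})$ drift correction reduces the problem to the one-dimensional optimization
\[
\min_{\eta \in \R}\left\{\frac{\eta^2}{\nu^2} + \frac{1}{y_0^2 \psi(\eta)}\left(\frac{k - x_0}{t^{\frac12 - H}} - \rho y_0 C_{RK}(\eta)\frac{\eta}{\nu}\right)^2\right\},
\]
whose minimizer is the $\eta_*$ in the statement. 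The Laplace argument at the boundary $x = k$ is regularized by the factor $e^x - e^k \approx e^k(x-k)$, which vanishes linearly there and therefore produces only a $\sqrt t$ prefactor, not an extra exponential rate.

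The main obstacle lies in making the Laplace expansion uniform in $t$. Since $\eta_*$ depends on $t$ through the ratio $(k - x_0)/t^{\frac12 - H}$, which diverges as $t \to 0^+$ whenever $H < \frac12$, one must control the Hessian of the one-dimensional objective at $\eta_*$ along the $t$-trajectory so that the Gaussian concentration around $\eta_*$ yields a prefactor that is merely polynomial in $t$. In addition, one has to check that the multiplicative density error $O(\sqrt t)$ in Theorem \ref{thm:exp-rho-nzero} is uniformly integrable against $e^{-\Phi/(2t^{2H})}$ in the $\eta$-tails; this amounts to a growth estimate on $\psi(\eta)^{-1}$ and $C_{RK}(\eta)$ for $|\eta| \to \infty$, a standard but technically delicate tail-analysis task that is the most substantial step in the argument.
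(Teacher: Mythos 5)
Your proposal follows essentially the same route as the paper's proof: substitute the small-time density asymptotic from Theorem~\ref{thm:exp-rho-nzero} into the call-price integral, factor out a power of $t$ so the exponent takes the Laplace form $\exp\{-\theta(t,\cdot)/t\}$ (the paper sets $\alpha=\frac12-H$ and writes the exponent relative to $1/(2t)$ rather than $1/(2t^{2H})$, but this is cosmetic), apply Lemma~\ref{lma:Laplace-asymp}, and then argue that the constrained minimizer lies on the boundary $x=k$ so the problem collapses to a one-dimensional minimization over $\eta$. The one place the two treatments differ slightly is the drift correction $\frac{y_0^2 t^{H+\frac12}}{2}C_{eR}(\eta)$: the paper completes the square and peels this off as a $t$-independent prefactor $e^{-\frac{x-x_0}{2\psi(\eta)}C_{eR}(\eta)}$ before invoking Laplace, whereas you keep it inside $\Phi$ and then discard it at the end; both are legitimate at leading exponential order, and the paper's version is marginally cleaner because it makes transparent that the correction does not affect the rate.

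One point to sharpen: your argument that $x_c(\eta,t)<k$ for small $t$ rests on $x_c - x_0 = O(t^{\frac12-H})$ tending to zero, which works for $H<\frac12$ but is vacuous at the borderline $H=\frac12$ (where $x_c-x_0 = \rho y_0 C_{RK}(\eta)\frac{\eta}{\nu} + O(t)$, not $o(1)$). The paper is no more explicit here, merely asserting $x_*=k$ from continuity and quadraticity in $x$, and effectively folding the boundary-minimum requirement into the hypotheses of Lemma~\ref{lma:Laplace-asymp}. Your closing remarks on the $t$-dependence of $\eta_*$, the Hessian control, and the tail estimates on $\psi(\eta)^{-1}$ and $C_{RK}(\eta)$ correctly identify the technical hypotheses of Lemma~\ref{lma:Laplace-asymp} that would need to be verified in a fully rigorous account; the paper leaves these unverified as well.
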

\begin{proof}
The proof is a straigtforward application of the Laplace asymptotic formula \eqref{eqn:lap-int-1} in Lemma \ref{lma:Laplace-asymp}.  Let $\mathcal{C} = \{(x,\eta):x \geq k\}\subseteq\mathbb{R}^2$ and $\alpha = \frac12 - H \geq 0$. By using the asymptotic density \eqref{eqn:small-time-p}, consider
\beaa
 C(k,t)& =&  \int_0^\infty\int_k^\infty (e^{x} - e^k) p(t;x,y) dx dy \\
&=& \frac{1}{2\pi} \int_0^\infty\int_k^\infty \left(e^{x} - e^k \right) \left\{
\frac1{y \sqrt{\nu^2 t^{2H}}} e^{-\frac{\eta^2}{2 \nu^2t^{2H}}} \,
\frac1{y_0 \sqrt{t\psi(\eta)}} \right. \times \\
&& \left.e^{-\frac1{2y_0^2t\psi(\eta)}\left(x - x_0 - \rho y_0 C_{RK}(\eta) \frac\eta\nu t^{\frac12 - H} \right)^2} 
 e^{-\frac{x - x_0}{2\psi(\eta)}C_{eR}(\eta)} \left( 1 + O\left(\sqrt t \right)\right) \right\} dx dy \\
&=& \frac{1}{2\pi \nu y_0 t^{H + \frac12}} \iint\limits_{\mathcal{C}} \left(\frac{e^{x} - e^k}{\sqrt{\psi(\eta)}} \right) \,
e^{-\frac{x- x_0}{2\psi(\eta)}C_{eR}(\eta)} \times \\
&& \qquad e^{-\frac1{2t} \left\{\frac{\eta^2}{\nu^2}  t^{2\alpha} + \frac1{y_0^2 \psi(\eta)}\left(x - x_0 - \rho y_0 C_{RK}(\eta) \frac\eta\nu t^\alpha \right)^2 \right\}} \, \left( 1 + O\left(\sqrt t \right)\right) dx d\eta.
\eeaa
Applying the Laplace asymptotic formula \eqref{eqn:lap-int-1} to the lowest order term in the last expresion yields
\beaa
- \ln C(k,t) &\approx& \frac1{2t} \left\{\frac{\eta_*^2}{\nu^2}  t^{2\alpha} + \frac1{y_0^2 \psi(\eta_*)}\left(x_* - x_0 - \rho y_0 C_{RK}(\eta_*) \frac{\eta_*}{\nu} t^\alpha \right)^2 \right\} \\
&=& \frac1{2t^{2H}} \left\{\frac{\eta_*^2}{\nu^2}+ \frac1{y_0^2 \psi(\eta_*)}\left(\frac {x_*- x_0}{t^\alpha}  - \rho y_0 C_{RK}(\eta_*) \frac{\eta_*}{\nu} \right)^2 \right\},
\eeaa
where, for fixed $t$, $(x_*,\eta_*)$ is the minimizer of the function
\beaa
(x_*,\eta_*) &=& \argmin\left\{(x,\eta)\in\mathcal{C}: \frac{\eta^2}{\nu^2}  t^{2\alpha} + \frac1{y_0^2 \psi(\eta)}\left(x - x_0 - \rho y_0 C_{RK}(\eta)\frac\eta\nu t^\alpha \right)^2 \right\} \\
&=& \argmin\left\{(x,\eta)\in\mathcal{C}: \frac{\eta^2}{\nu^2} + \frac1{y_0^2 \psi(\eta)}\left(\frac{ x-x_0}{t^\alpha} - \rho y_0 C_{RK}(\eta) \frac\eta\nu \right)^2 \right\}.
\eeaa
Since the objective function is continuous in $(x,\eta)\in\mathcal{C}$ and it is a quadratic function in $x$,  it follows that $x_* = k$ when $t$ is small enough, thereby
\[
\eta_* = \argmin\left\{\eta : \frac{\eta^2}{\nu^2} + \frac1{y_0^2 \psi(\eta)}\left(\frac {k-x_0}{t^\alpha} - \rho y_0 C_{RK}(\eta)\frac\eta\nu \right)^2 \right\}.
\]
\end{proof}
\begin{remark} 
For the case $H>\frac12$, from asymptotic density \eqref{eqn:small-time-p} it implies
\beaa
 C(k,t)& =&  \int_0^\infty\int_k^\infty (e^{x} - e^k) p(t;x,y) dx dy \\
&=& \frac{1}{2\pi \nu y_0 t^{H + \frac12}} \iint\limits_{\mathcal{C}} \left(\frac{e^{x} - e^k}{\sqrt{\psi(\eta)}} \right) 
e^{-\frac1{2t^{2H}}\left[\frac{\eta^2}{\nu^2}+\left(\frac{(x-x_0)t^{H-\frac12}}{y_0\sqrt{\psi(\eta)}}+\frac{y_0C_{eR}(\eta)t^{H+\frac12}}{2\sqrt{\psi(\eta)}}-\frac{\rho C_{RK}(\eta)\eta}{\sqrt{\psi(\eta)}\nu}\right)^2\right]}.
\eeaa
When $t$ is small enough, the minimum value of 
\[
\frac{\eta^2}{\nu^2}+\left(\frac{(x-x_0)t^{H-\frac12}}{y_0\sqrt{\psi(\eta)}}+\frac{y_0C_{eR}(\eta)t^{H+\frac12}}{2\sqrt{\psi(\eta)}}-\frac{\rho C_{RK}(\eta)\eta}{\sqrt{\psi(\eta)}\nu}\right)^2
\]
is not attained at the boundary of $\mathcal{C}$. Hence, the Laplace asymptotic formula cannot be applied in this case. However, when $H>\frac12$, we can still present the following uniqueness of the minimal point $\eta^\ast$ graphically in the following Remark.
\end{remark}
\begin{remark}
The plots in Figure \ref{fig:contour} shows graphically the uniqueness of the minimal point $\eta_*$ for $H = \frac14$ and $H = \frac34$. In theses particular examples, the contours are convex in the half plane $x > k$, which corresponds to the out-of-money calls. For out-of-money puts, $x < k$, though the contours are not convex, the uniqueness of $\eta_*$ sustains.
\begin{figure}[ht!]
\begin{center}
\includegraphics[width=7cm, height=8cm]{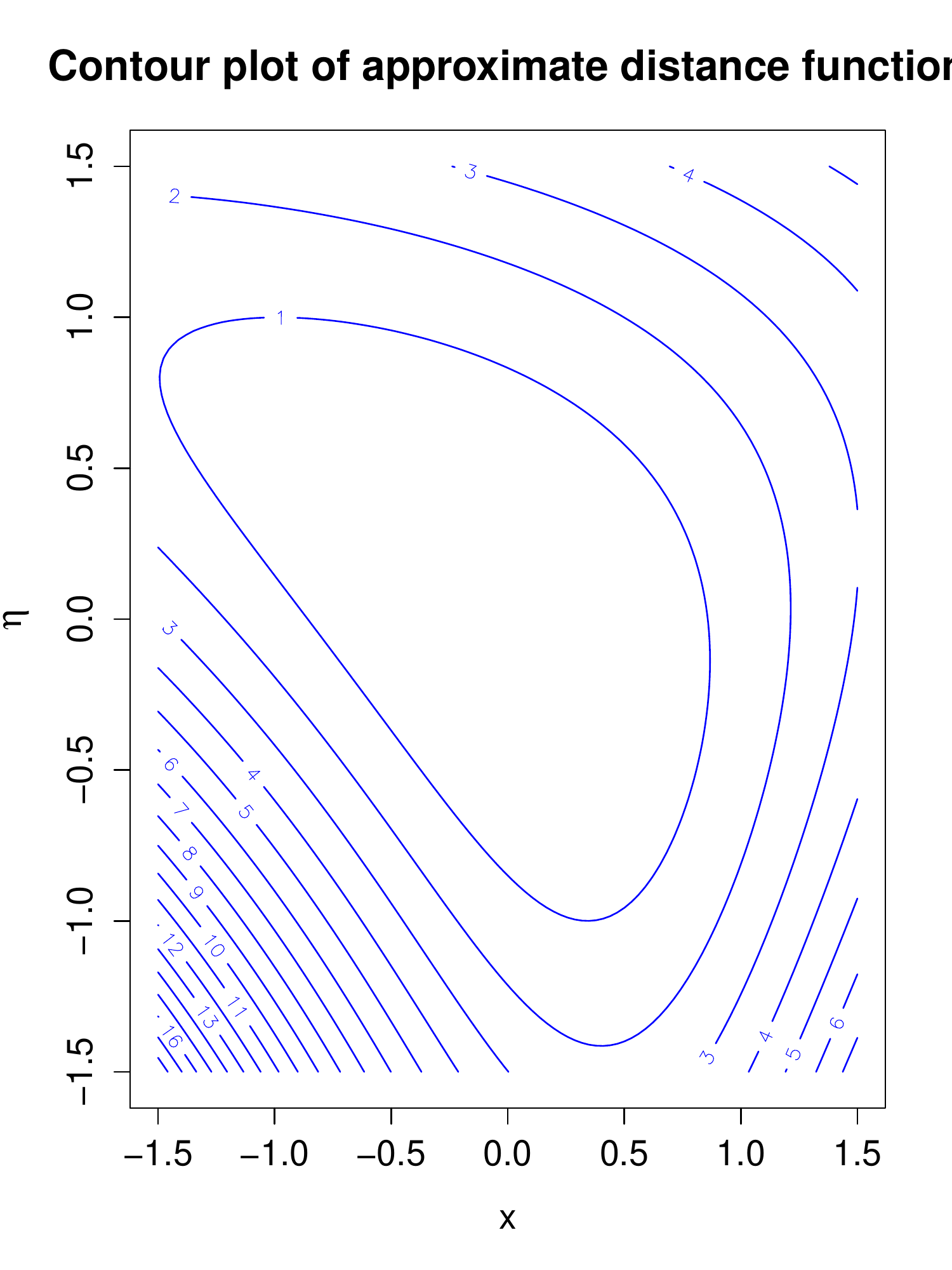} \;
\includegraphics[width=7cm, height=8cm]{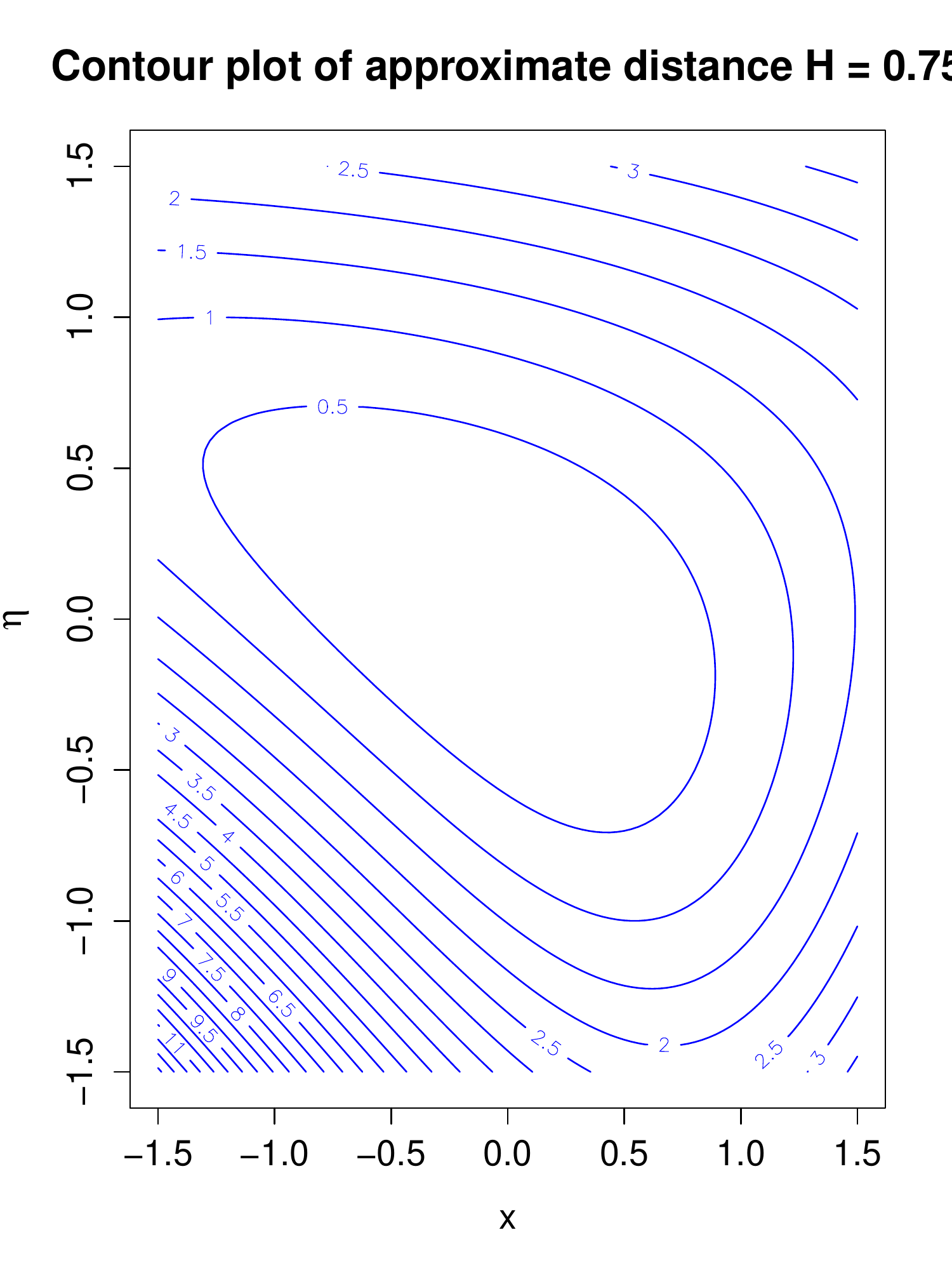}
\end{center}
\caption{The contour plots. Parameters $\rho = -0.7$, $\nu=1$, $y_0 = 1$, $t = 0.5$. $H = 0.75$ on the right; $H = 0.25$, on the left.}
\label{fig:contour}
\end{figure}
\end{remark}
So long as we establish an asymptotic for the log price $\ln C(k,t)$ for $k > x_0$, by using the following small time asymptotic for implied volatility in \cite{gao-lee} or \cite{roper-rutkowski}
\begin{equation}
\sigma_{\BS}(k,t) = \frac{|k-x_0|}{\sqrt{2 t |\ln C(k,t)|}} + O\left( \frac{\ln|\ln C(k,t)|}{\sqrt t |\ln C(k,t)|^{3/2}} \right), \label{eqn:iv-small-time-asymp}
\end{equation}
an asymptotic formula for implied volatility follows immediate. We summarize the result in the following theorem but omitting its proof.
\begin{theorem}  \label{thm:fSABR-fake} 
Let $H\leq\frac12$ and let $k = \ln K$ be the log moneyness and $\alpha = \frac12 - H$.
The implied volatility $\sigma_{\BS}(k,t)$ for out-of-money calls ($k > x_0$) has the following asymptotic in small time to expiry
\begin{equation}
\sigma_{\BS}^2(k,t) = \sigma_{\BS}^2\left(\frac k{t^\alpha} \right) \approx \frac{(k-x_0)^2}{t^{2\alpha}} \left\{\frac{\eta_*^2}{\nu^2} + \frac1{y_0^2 \psi(\eta_*)} \left(\frac {k-x_0}{t^\alpha} - \rho y_0^2 C_{RK}(\eta_*)\frac{\eta_*}{\nu} \right)^2 \right\}^{-1}. \label{eqn:fSABR-fake}
\end{equation}
The minimal point $\eta_*$ is given Lemma in \ref{lma:logC-small-time}.
\end{theorem}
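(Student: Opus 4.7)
The plan is to combine Lemma \ref{lma:logC-small-time} with the model-free small-time asymptotic relation \eqref{eqn:iv-small-time-asymp} between implied volatility and out-of-money call price. Since Lemma \ref{lma:logC-small-time} delivers a very sharp exponential decay for $C(k,t)$, while the Gao-Lee / Roper-Rutkowski formula converts tail behavior of the call into leading-order implied volatility, the argument is essentially algebraic once the two pieces are lined up.

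First, I would record the consequence of Lemma \ref{lma:logC-small-time}: for $k > x_0$ fixed and $t \to 0^+$,
\begin{eqnarray*}
2 t \, |\ln C(k,t)| &\approx& t^{1-2H}\left\{\frac{\eta_*^2}{\nu^2} + \frac1{y_0^2 \psi(\eta_*)}\left(\frac{k-x_0}{t^{\alpha}} - \rho y_0 C_{RK}(\eta_*) \frac{\eta_*}{\nu}\right)^2\right\},
\end{eqnarray*}
where I use $1 - 2H = 2\alpha$. In particular $|\ln C(k,t)| \to \infty$, so the error term in \eqref{eqn:iv-small-time-asymp} is well defined. Since the quantity in braces is bounded away from $0$ and $\infty$ in $t$ (it has a nonzero finite limit whenever $k > x_0$, because the $(k-x_0)/t^\alpha$ term dominates for $\alpha > 0$ and is simply $k-x_0$ for $H = 1/2$), this also gives $|\ln C(k,t)| \asymp t^{-2H}$ and $\ln |\ln C(k,t)| = O(\ln(1/t))$.

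Second, I would plug this into \eqref{eqn:iv-small-time-asymp} and square:
\begin{eqnarray*}
\sigma_{\BS}^2(k,t) &=& \frac{(k-x_0)^2}{2t\,|\ln C(k,t)|}\bigl(1 + o(1)\bigr) \\
&\approx& \frac{(k-x_0)^2}{t^{2\alpha}}\left\{\frac{\eta_*^2}{\nu^2} + \frac1{y_0^2\psi(\eta_*)}\left(\frac{k-x_0}{t^\alpha} - \rho y_0 C_{RK}(\eta_*)\frac{\eta_*}{\nu}\right)^2\right\}^{-1},
\end{eqnarray*}
which is the claimed formula. The estimate $\ln|\ln C|/(\sqrt{t}|\ln C|^{3/2}) = O(\sqrt{t}\,\ln(1/t)\cdot t^{3H})$ shows the error in \eqref{eqn:iv-small-time-asymp} is negligible compared to the leading $|k-x_0|/\sqrt{2t|\ln C|} \asymp t^{H-\alpha} = t^{2H - 1/2}$ contribution, so after squaring it contributes only a multiplicative $1 + o(1)$.

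The main conceptual obstacle is not the algebra but ensuring the ``$\approx$'' in Lemma \ref{lma:logC-small-time} has a strong enough sense (relative error $o(1)$ in $-\ln C$, not merely in $C$) to pass through the Gao-Lee formula; this requires the Laplace-type estimate underlying Lemma \ref{lma:logC-small-time} to capture not just the leading exponential but to control the polynomial prefactor so that $\ln C(k,t)$ equals the stated quantity plus a term of order $\ln(1/t)$. One also needs continuity of $\eta_*$ in $(k,t)$ (evident from the strict convexity of the objective in $\eta$ near the minimizer, visible in Figure \ref{fig:contour}) so that $\eta_*$ inherited from the call-price expansion is the same $\eta_*$ appearing in the implied-volatility formula. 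Both points are routine given the framework already set up, which is why the authors can omit the proof.
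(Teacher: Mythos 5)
Your proposal matches the route the paper itself indicates (the paper omits the proof of Theorem~\ref{thm:fSABR-fake} but states just before it that, once the asymptotic for $\ln C(k,t)$ from Lemma~\ref{lma:logC-small-time} is in hand, the result ``follows immediate'' by substituting into \eqref{eqn:iv-small-time-asymp}): write $2t|\ln C(k,t)|$ using the lemma, plug into the Gao--Lee/Roper--Rutkowski expansion, and square. So the approach is correct and is the same as the paper's. Two small remarks on your error bookkeeping: for $H<\frac12$ the quantity in braces does \emph{not} stay bounded in $t$ --- the $(k-x_0)/t^\alpha$ term blows up, so the braces grows, which is exactly why the implied volatility itself explodes --- but this does not hurt you, since the only thing needed to tame the remainder in \eqref{eqn:iv-small-time-asymp} is $\ln|\ln C|/|\ln C|\to 0$, which holds whenever $|\ln C|\to\infty$. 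Your exponents also have slips (the leading term is of order $t^{H-\frac12}$, not $t^{2H-\frac12}$, and the remainder is $O(t^{3H-\frac12}\ln(1/t))$, not $O(t^{\frac12+3H}\ln(1/t))$); the corrected ratio is $O(t^{2H}\ln(1/t))\to 0$, so the conclusion stands. Finally, carrying Lemma~\ref{lma:logC-small-time} through verbatim produces $\rho\, y_0\, C_{RK}(\eta_*)\,\eta_*/\nu$ in the bracket, so the $\rho\, y_0^2$ appearing in \eqref{eqn:fSABR-fake} is a typo in the paper, not a discrepancy in your derivation.
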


\begin{remark}
Note that \eqref{eqn:fSABR-fake} does not recover the SABR formula when $H = \frac12$. The derivation of the SABR formula relies heavily on the geometry and symmetry of the underlying SABR plane which is isometric to the Poincar\'e plane. Figure \ref{fig:SABR-vs-fSABR} shows the comparison between the two formulas with time to expiry $t=1$. Parameters are chosen so as to reproduce the figures in \cite{hklw}. In this set of parameters, the maximal difference between the two approximate implied volatility curves is about 1\% for logmoneyness $k \in [-1,1]$.
\end{remark}
\begin{figure}[ht!]
\begin{center}
\includegraphics[width=7cm, height=8cm]{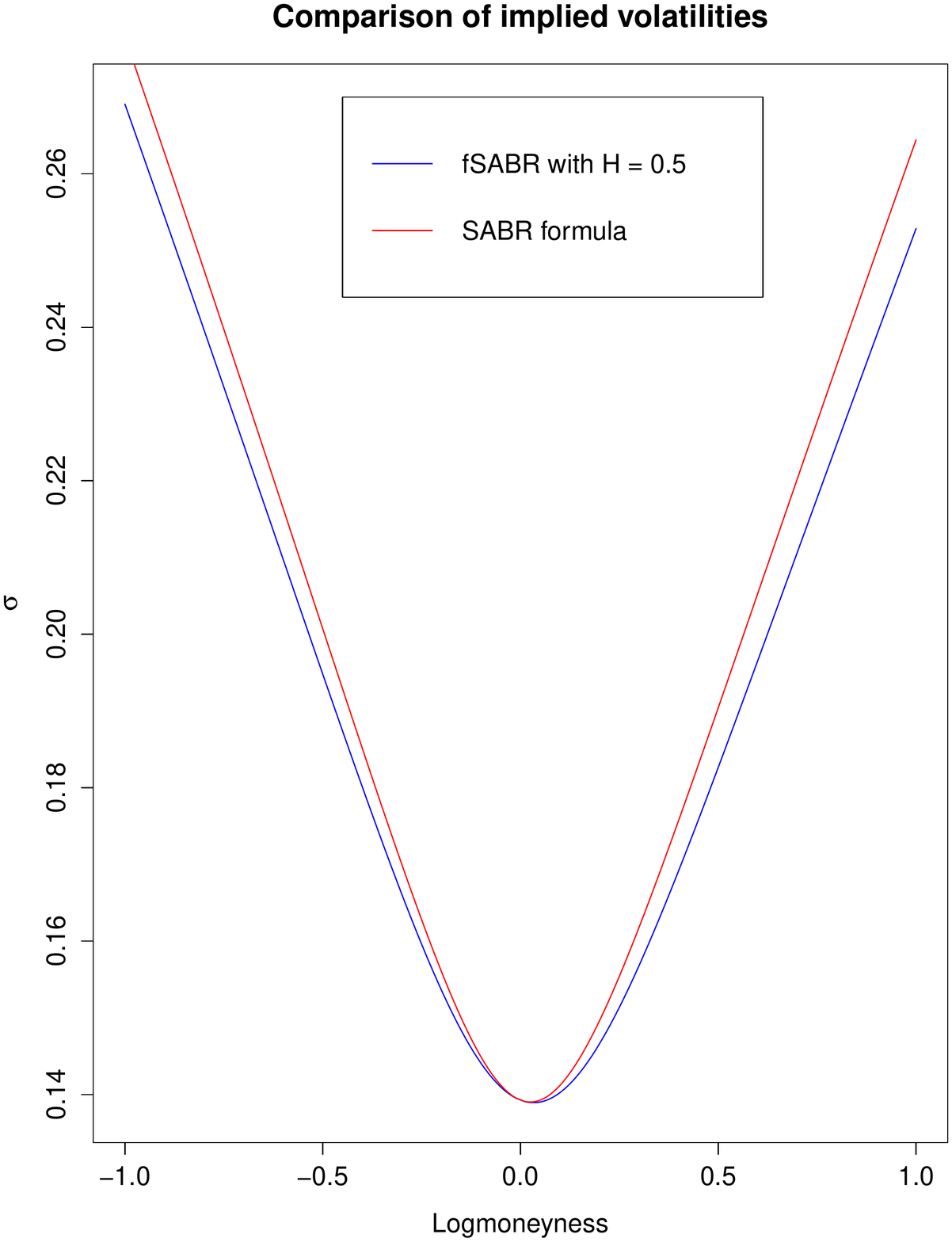} \;
\includegraphics[width=7cm, height=8cm]{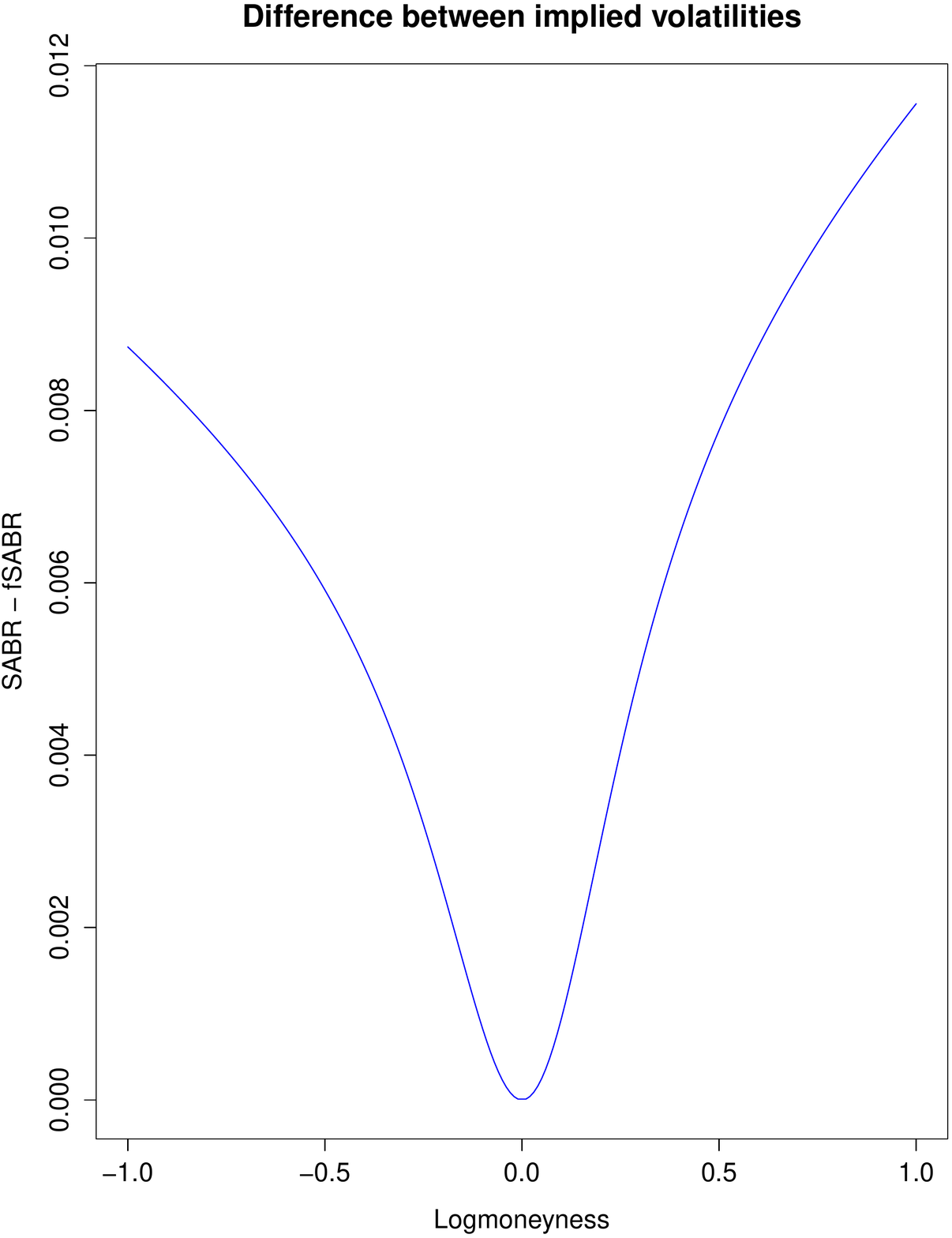}
\end{center}
\caption{The plot on the left shows the approximate implied volatility curves versus logmoneyness with time to expiry $t = 1$ produced by \eqref{eqn:fSABR-fake} (in blue) the SABR formula \eqref{eqn:sabr-formula} (in red). Parameters are set as $\rho = -0.06867 $, $\nu = 0.5778$, $a_0 = 0.13927$. The plot on the right shows the difference between the two curves.}
\label{fig:SABR-vs-fSABR}
\end{figure}

 We conclude the section by remarking that, as time to expiry $t$ approaches zero, the approximate 
implied volatility $\sigma_{\BS}(k,t)$ flattens out with $H > \frac12$;
whereas the whole surface $\sigma_{\BS}(k,t)$ explodes with $H < \frac12$ except for the at-the-money option $k=x_0$. Figure \ref{fig:imp-vols} shows the plots of approximate implied volatilities $\sigma$ given in \eqref{eqn:fSABR-fake}
versus logmoneyness $k$ for time to expiry $t = 0.01$ and $t=1$ respectively, and various Hurst exponenets $H$. As in Figure \ref{fig:SABR-vs-fSABR}, parameters are chosen as $a_0 = 0.13927$, $\nu = 0.5778$, and $\rho = -0.06867$. The numerical determination of the $\eta_*$'s is relatively efficient since it is basically a one-dimensional optimization problem.
\begin{figure}[ht!]
\begin{center}
\includegraphics[width=7cm, height=8cm]{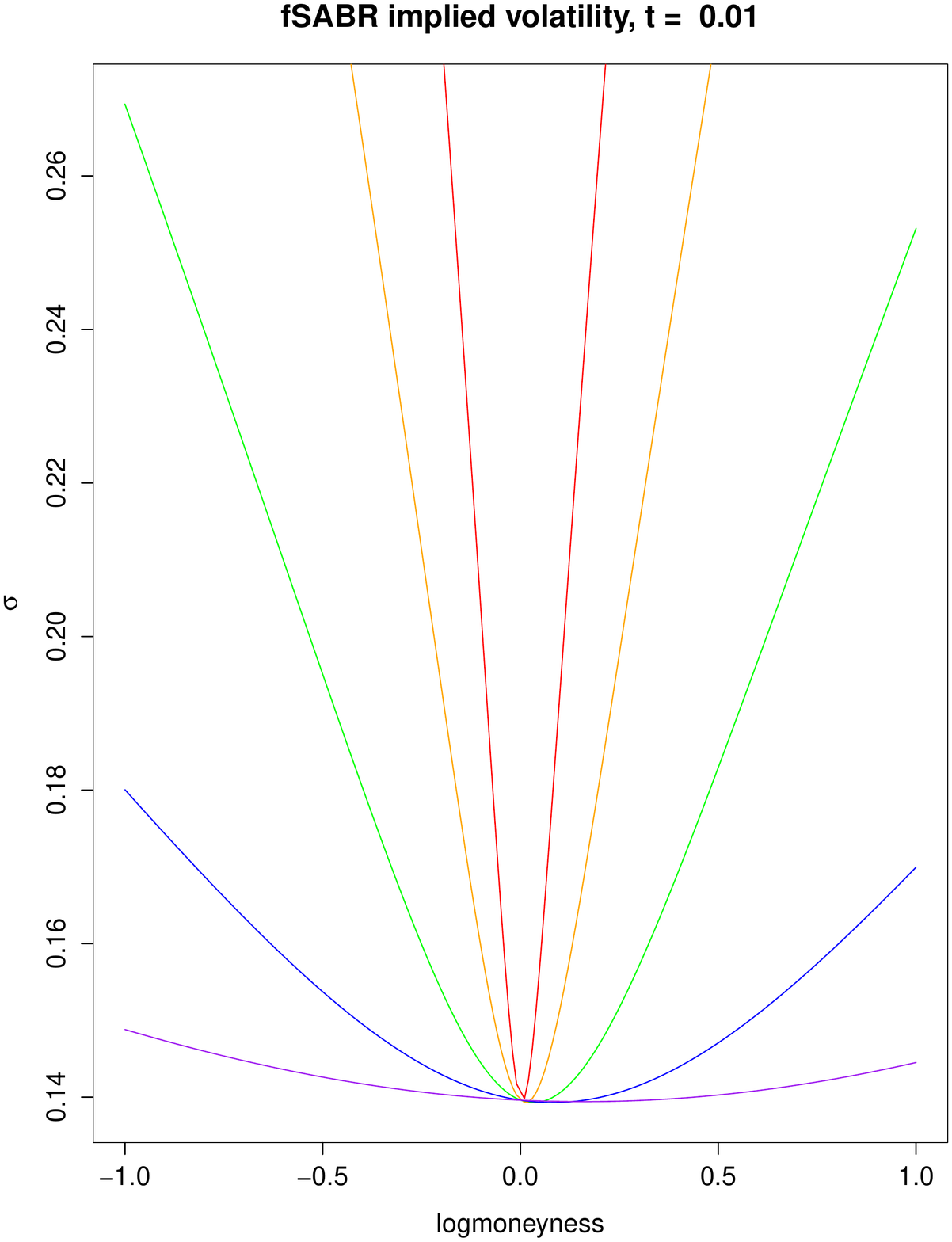} \;
\includegraphics[width=7cm, height=8cm]{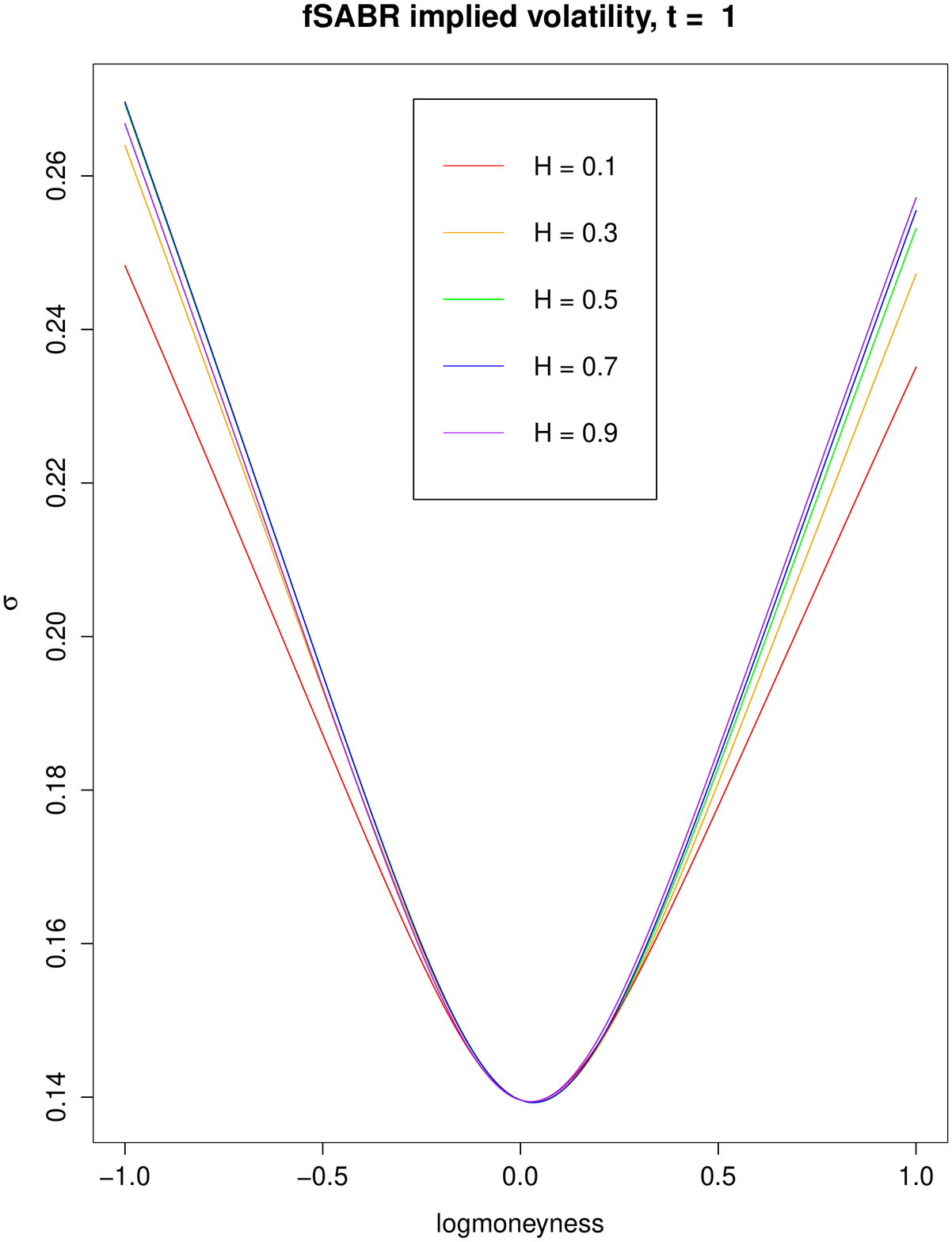}
\end{center}
\caption{The implied volatility curves for $t = 0.01$ on the left, $t = 1$ on the right. Parameters are set as $\rho = -0.06867 $, $\nu = 0.5778$, $a_0 = 0.13927$. $H=0.1$ in red, $H=0.3$ in orange, $H = \frac12$ in green, $H=0.7$ in blue, $H=0.9$ in purple.}
\label{fig:imp-vols}
\end{figure}


%
%

\section{A heuristic large deviation principle} \label{sec:ldp}
In this section, we provide a heuristic derivation of the sample path large deviation principle for $(X_t, Y_t)$ in small time by bootstrapping the bridge representation to multiperiod. For notational simplicity, we introduce the following vector notations.
\beaa
&& \bt = (t_1, \cdots, t_n), \qquad \bx_{\bt} = (x_{t_1}, \cdots, x_{t_n}), \qquad \by_{\bt} = (y_{t_1}, \cdots, y_{t_n}) ,\\
&& \bB^H_{\bt} = (B^H_{t_1}, \cdots, B^H_{t_n}), \qquad \bX_{\bt} = (X_{t_1}, \cdots, X_{t_n}), \qquad \bY_{\bt} = (Y_{t_1}, \cdots, Y_{t_n}), \\
&& \bsxi_{\bt} = (\xi_{t_1}, \cdots, \xi_{t_n}), \qquad \bseta_{\bt} = (\eta_{t_1}, \cdots, \eta_{t_n}), \qquad \bszeta_{\bt} = (\zeta_{t_1}, \cdots, \zeta_{t_n}).
\eeaa
\begin{theorem} \label{thm:rep-multi}
The multiperiod joint density $p$ of $(X_t, Y_t)$
\beaa
&& p(x_1, y_1, \cdots, x_n, y_n) := \Pof{(X_{t_1}, Y_{t_1}) = (x_1, y_1), \cdots, (X_{t_n}, Y_{t_n}) = (x_n, y_n)}
\eeaa
has the following bridge representation
\bea
&& p(x_1, y_1, \cdots, x_n, y_n) \label{eqn:bridge-rep-multi} \\
&=& \Eof{\left. \prod_{k=1}^n \frac1{\sqrt{2\pi y_0^2\bar\rho^2 \Delta v_{t_k}}} e^{-\frac1{2 y_0^2 \bar\rho^2 \Delta v_{t_k}}{\left(\Delta x_{t_k} - y_0 \rho \int_{t_{k-1}}^{t_k} e^{\nu B^H_s} dB_s + \frac{y_0^2}2 \Delta v_{t_k} \right)^2}} \right| \nu \bB^H_{\bt} = \bseta_{\bt}} \times \nonumber \\
&& \P\left[y_0 e^{\nu \bB^H_{\bt}} = \by_{\bt} \right], \nonumber
\eea
where $\bseta_{\bt} = \log \by_{\bt} - \log y_0$, $\Delta x_{t_k} = x_{t_k} - x_{t_{k-1}}$, and $\Delta v_{t_k} = v_{t_k} - v_{t_{k-1}}$ for $k = 1, \cdots, n$. Recall that $v_t = \int_0^t e^{\nu B^H_s} ds$.
\end{theorem}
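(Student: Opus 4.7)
The plan is to mimic the single-time argument in Theorem \ref{thm:bridge-rep-p} with the increments of $X$ playing the role of $X_t$ itself. First I would establish that the $2n$-dimensional law of $(\bX_{\bt}, \bY_{\bt})$ is absolutely continuous with respect to Lebesgue measure on $\R^{2n}$, so that the density $p(x_1,y_1,\dots,x_n,y_n)$ is well-defined. The Malliavin matrix of $(X_{t_1},Y_{t_1},\dots,X_{t_n},Y_{t_n})$ in the bases $(D^1, D^2)$ has a block structure inherited from the one analyzed in the single-time proof: the $Y$-rows involve only $D^2$ and are multiples of $K_H(t_k,\cdot)\mathbf{1}_{[0,t_k]}$, which are linearly independent in $L^2$, while the $X$-rows pick up a non-degenerate $W$-component $\bar\rho y_0 e^{\nu B^H_\theta}\mathbf{1}_{[0,t_k]}(\theta)$ that is orthogonal to the $Y$-block. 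An iteration of the Cauchy–Schwarz argument used for Theorem \ref{thm:bridge-rep-p}, applied to successive $2\times 2$ blocks, shows the matrix is a.s.\ invertible, so Lemma \ref{lem:absolute} (generalized to $\R^{2n}$) gives the existence of the joint density.

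Next, for a bounded continuous test function $f$ on $\R^{2n}$, I would write the increments explicitly as
\[
X_{t_k}-X_{t_{k-1}} = \rho y_0\!\!\int_{t_{k-1}}^{t_k}\!\! e^{\nu B^H_s}dB_s + \bar\rho y_0\!\!\int_{t_{k-1}}^{t_k}\!\! e^{\nu B^H_s}dW_s - \frac{y_0^2}{2}\Delta v_{t_k},
\]
and condition on $\cF^B_{t_n}$. Since $W$ is independent of $B$ (hence of $B^H$) and has independent increments, conditional on $\cF^B_{t_n}$ the random variables
$\bar\rho y_0\int_{t_{k-1}}^{t_k} e^{\nu B^H_s}dW_s$, $k=1,\dots,n$, are \emph{independent} centered Gaussians with variances $y_0^2\bar\rho^2\Delta v_{t_k}$. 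Consequently, conditional on $\cF^B_{t_n}$, the vector $(\Delta X_{t_1},\dots,\Delta X_{t_n})$ has the product Gaussian conditional density with the $k$-th factor centered at $\rho y_0\int_{t_{k-1}}^{t_k} e^{\nu B^H_s}dB_s-\frac{y_0^2}{2}\Delta v_{t_k}$ and variance $y_0^2\bar\rho^2\Delta v_{t_k}$.

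Then I would take $\E[f(\bX_{\bt},\bY_{\bt})]$, substitute the conditional product density above, change variables from $(\Delta x_{t_1},\dots,\Delta x_{t_n})$ to $(x_{t_1},\dots,x_{t_n})$ (the Jacobian is $1$), and use that $Y_{t_k}=y_0 e^{\nu B^H_{t_k}}$ is $\sigma(B^H_{t_k})$-measurable so the $\by$-dependence enters only through $\bB^H_{\bt}$. Applying the tower property in the form $\E[\,\cdot\,]=\E\bigl[\E[\,\cdot\mid \bB^H_{\bt}]\bigr]$, and integrating against the joint density of $\bB^H_{\bt}$, isolates the $\by$-factor $\P[y_0 e^{\nu\bB^H_{\bt}}=\by_{\bt}]$ (written in the density sense; equal to the Gaussian density of $\bB^H_{\bt}$ at $\bseta_{\bt}/\nu$ times the Jacobian $(\nu^n y_{t_1}\cdots y_{t_n})^{-1}$). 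Reading off the integrand as a function of $\bx_{\bt}$ identifies the joint density with the right-hand side of \eqref{eqn:bridge-rep-multi}.

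The main obstacle I anticipate is the partial conditioning step: the Gaussian factors inside the conditional expectation still depend on the \emph{entire} path of $B^H$ through $\int_{t_{k-1}}^{t_k}e^{\nu B^H_s}dB_s$ and $\Delta v_{t_k}$, so after conditioning on just the terminal values $\bB^H_{\bt}=\bseta_{\bt}/\nu$ one must justify that these functionals remain well-defined conditional random variables and that Fubini applies to the iterated conditional expectations. This was implicit in the one-period proof but requires a little care in the multi-period setting; it can be handled by standard regular-conditional-probability arguments for the Gaussian process $B^H$ pinned at finitely many points, and by a truncation/approximation to justify the exchange of expectation and the $\bx$-integration before identifying the density.
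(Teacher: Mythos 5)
Your proposal is correct and follows essentially the same route as the paper: condition on $\cF^B_{t_n}$, use that the increments $\bar\rho y_0\int_{t_{k-1}}^{t_k}e^{\nu B^H_s}dW_s$ are conditionally independent centered Gaussians with variances $y_0^2\bar\rho^2\Delta v_{t_k}$, change variables from the increments to $\bx_{\bt}$ (unit Jacobian), and then condition on $\nu\bB^H_{\bt}=\bseta_{\bt}$ to peel off the marginal factor $\P[y_0 e^{\nu\bB^H_{\bt}}=\by_{\bt}]$. The preliminary Malliavin-matrix step you add for absolute continuity is not in the paper's proof of this theorem (the paper only does it for the one-period case), and the regular-conditional-probability concern you raise is also left implicit in the paper, but neither changes the fact that the core argument matches.
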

\begin{proof}
For any bounded measurable function $f:\R^n\times\R^n \to \R$, consider the expectation
\beaa
&& \iint f(\bx_{\bt}, \by_{\bt}) p(\bx_{\bt}, \by_{\bt}) d\bx_{\bt}d\by_{\bt} \\
&=& \Eof{f(\bX_{\bt}, \bY_{\bt})} \\
&=& \Eof{\Eof{\left. f(\bX_{\bt}, \bY_{\bt})\right|\cF^B_{t_n}}}.
\eeaa
Let $\xi_{t_i} = \int_0^{t_i} e^{\nu B_s^H} dW_s$, $\zeta_{t_i} = \int_0^{t_i} e^{\nu B_s^H} dB_s$ and thus accordingly $\Delta \xi_{t_i} = \xi_{t_i} - \xi_{t_{i-1}} = \int_{t_{i-1}}^{t_i} e^{\nu B_s^H} dW_s$, $\Delta \zeta_{t_i} = \zeta_{t_i} - \zeta_{t_{i-1}} = \int_{t_{i-1}}^{t_i} e^{\nu B_s^H} dB_s$. Note that, conditioned on $\cF_{t_n}^B$, the random variables $\Delta\xi_{t_i}$'s are independent normal with mean $0$ and variance $\Delta v_{t_i}$.
We calculate the conditional expectation as follows.
\bea
&& \Eof{\left. f(\bX_{\bt}, \bY_{\bt})\right|\cF^B_{t_n}} \nonumber \\
&=& \Eof{\left. f\left(x_0 + \rho y_0 \bszeta_{\bt} - \frac{y_0^2}2 \bv_{\bt} + \bar\rho y_0 \bsxi_{\bt}, y_0 e^{\nu \bB_{\bt}^H} \right)\right|\cF^B_{t_n}} \nonumber \\
&=& \int f\left(x_0 + \rho y_0 \bszeta_{\bt} - \frac{y_0^2}2 \bv_{\bt} + \bar\rho y_0 \bsxi_{\bt}, y_0 e^{\nu \bB_{\bt}^H} \right) \prod_{k=1}^n \frac1{\sqrt{2\pi\Delta v_{t_k}}} e^{-\frac{(\Delta\xi_{t_k})^2}{2 \Delta v_{t_k}}} d\Delta \bsxi_{\bt}. \label{eqn:f-int-multip}
\eea
By applying the change of variables
\beaa
&& x_{t_k} = x_0 + \rho y_0 \zeta_{t_k} - \frac{y_0^2}2 v_{t_k} + \bar\rho y_0 \xi_{t_k} \, ,
\eeaa
thus
\beaa
&& \Delta \xi_{t_k} = \frac1{\bar\rho y_0} \left(\Delta x_{t_k} - \Delta \zeta_{t_k} - \frac{y_0^2}2 \Delta v_{t_k}\right).
\eeaa
The integral \eqref{eqn:f-int-multip} becomes
\beaa
&& \int f\left( \bx_{\bt}, y_0 e^{\nu \bB_{\bt}^H} \right) \prod_{k=1}^n \frac1{\sqrt{2\pi\Delta v_{t_k}}} e^{-\frac{(\Delta\xi_{t_k})^2}{2\Delta v_{t_k}}} d \Delta \bsxi_{\bt} \\
&=& \int f\left( \bx_{\bt}, y_0 e^{\nu \bB_{\bt}^H} \right) \prod_{k=1}^n \frac1{\sqrt{2\pi\bar\rho^2 y_0^2 \Delta v_{t_k}}} e^{-\frac1{2\bar\rho y_0\Delta v_{t_k}}\left( \Delta x_{t_k} - \rho y_0 \int_{t_{k-1}}^{t_k} e^{\nu B_s^H} dB_s - \frac{y_0^2}2 \Delta v_{t_k} \right)^2} d \Delta \bx_{\bt} \\
&=& \int f\left( \bx_{\bt}, y_0 e^{\nu \bB_{\bt}^H} \right) \prod_{k=1}^n \frac1{\sqrt{2\pi\bar\rho^2 y_0^2 \Delta v_{t_k}}} e^{-\frac1{2\bar\rho y_0\Delta v_{t_k}}\left( \Delta x_{t_k} - \rho y_0 \int_{t_{k-1}}^{t_k} e^{\nu B_s^H} dB_s - \frac{y_0^2}2 \Delta v_{t_k} \right)^2} d\bx_{\bt}
\eeaa
since the Jacobian between $d\Delta \bx_{\bt}$ and $d\bx_{\bt}$ 1. It follows that
\beaa
&& \iint f(\bx_{\bt}, \by_{\bt}) p(\bx_{\bt}, \by_{\bt}) d\bx_{\bt}d\by_{\bt} \\
&=& \Eof{\left. \Eof{f(\bX_{\bt}, \bY_{\bt}) \right| \cF_t^B}} \\
&=& \Eof{\int f\left( \bx_{\bt}, y_0 e^{\nu \bB_{\bt}^H} \right) \prod_{k=1}^n \frac1{\sqrt{2\pi\bar\rho^2 y_0^2 \Delta v_{t_k}}} e^{-\frac1{2\bar\rho y_0\Delta v_{t_k}}\left( \Delta x_{t_k} - \rho y_0 \int_{t_{k-1}}^{t_k} e^{\nu B_s^H} dB_s - \frac{y_0^2}2 \Delta v_{t_k} \right)^2} d\bx_{\bt}} \\
&=& \iint d\bx_{\bt} d\by_{\bt} f\left( \bx_{\bt}, \by_{\bt} \right) \times \\
&& \Eof{\left. \prod_{k=1}^n \frac1{\sqrt{2\pi\bar\rho^2 y_0^2 \Delta v_{t_k}}} e^{-\frac1{2\bar\rho y_0\Delta v_{t_k}}\left( \Delta x_{t_k} - \rho y_0 \int_{t_{k-1}}^{t_k} e^{\nu B_s^H} dB_s - \frac{y_0^2}2 \Delta v_{t_k} \right)^2} \right|\nu \bB^H_{\bt} = \bseta_{\bt}} \times \\
&& \P\left[y_0 e^{\nu \bB^H_{\bt}} = \by_{\bt} \right].
\eeaa
This completes the proof of bridge representation \eqref{eqn:bridge-rep-multi} since $f$ is arbitrary.
\end{proof}

To move onto a heuristic derivation of the sample path large deviation principle for $(X_t, Y_t)$ in small time, we take logarithm on both sides of \eqref{eqn:bridge-rep-multi} and obtain
\bea
&& \log p(x_{t_1}, y_{t_1}, \cdots, x_{t_n}, y_{t_n}) \nonumber \\
&=& \log \Eof{\left. \prod_{k=1}^n \frac1{\sqrt{2\pi y_0^2\bar\rho^2 \Delta v_{t_k}}} e^{-\frac1{2 y_0^2 \bar\rho^2 \Delta v_{t_k}}{\left(\Delta x_{t_k} - y_0 \rho \int_{t_{k-1}}^{t_k} e^{\nu B^H_s} dB_s + \frac{y_0^2}2 \Delta v_{t_k} \right)^2}} \right| \nu \bB^H_{\bt} = \bseta} \nonumber \\
&& + \log \P\left[\nu \bB^H_{\bt} = \bseta_{\bt} \right] - \sum \log y_{t_i}. \label{eqn:log-p-multip}
\eea
In the following, we ignore the last term on the right hand side of \eqref{eqn:log-p-multip} and intuitively calculate the limits as $n\to\infty$ of the first two terms. Note that to the leading order we have
\beaa
&& \log \P\left[\nu \bB^H_{\bt} = \bseta_{\bt} \right] \approx -\frac1{2\nu^2} \boldsymbol{\eta}' \bsR^{-1} \boldsymbol{\eta},
\eeaa
where $\bsR = [R(t_i, t_j)]$ is the covariance matrix of $\bB^H_{\bt}$. We further discretize the autovariance $R$ of fractional Brownian motion as
\beaa
&& R(t_i, t_j) = \Eof{B^H_{t_i}B^H_{t_i}} = \int_0^{t_i\wedge t_j} K_H(t_i, s) K_H(t_j, s) ds \\
&\approx& \sum_{k=0}^{i\wedge j} K_H(t_i, t_k) K_H(t_j,t_k) \Delta t = \boldsymbol{K}' \boldsymbol{K} \Delta t,
\eeaa
where $\boldsymbol K$ denotes the upper triangular matrix
\[
\boldsymbol K_{ij} =
\left\{\begin{array}{ll}
K_H(t_i, t_j), & \mbox{ if } i \geq j; \\
& \\
0, & \mbox{ otherwise.}
\end{array}\right.
\]
Thereby, $\bsR^{-1} = \frac1{\Delta t} \boldsymbol K^{-1} (\boldsymbol K')^{-1}$. Let $\bsb = (b_{t_1}, \cdots b_{t_n})$ be the solution to the linear system
\[
\frac{\boldsymbol\eta}\nu = \boldsymbol K \bsb \Delta t.
\]
It follows that
\beaa
&& \frac1{2\nu^2} \boldsymbol{\eta}' \bsR^{-1} \boldsymbol{\eta}
= \frac12 \Delta t \bsb' \boldsymbol{K}' \bsR^{-1} \boldsymbol{K} \bsb \Delta t \\
&=& \frac12 \bsb' \bsb \Delta t
= \frac12 \sum_{k=1}^n b_{t_k}^2 \Delta t \\
&\longrightarrow& \frac12 \int_0^T b_t^2 dt \quad \mbox{ as } n \to \infty.
\eeaa
Also in the limit as $n \to \infty$, we obtain $\eta_t = \nu \int_0^t K_H(t,s) b_s ds$.

On the other hand, for the first term on the right hand side of \eqref{eqn:log-p-multip}, we have
\beaa
&& \log \Eof{\left. \prod_{k=1}^n \frac1{\sqrt{2\pi y_0^2\bar\rho^2 \Delta v_{t_k}}} e^{-\frac1{2 y_0^2 \bar\rho^2 \Delta v_{t_k}}{\left(\Delta x_{t_k} - y_0 \rho \int_{t_{k-1}}^{t_k} e^{\nu B^H_s} dB_s + \frac{y_0^2}2 \Delta v_{t_k} \right)^2}} \right| \nu \bsB^H = \bseta} \\
&\approx& \sum_{k=1}^n \Eof{\left. -\frac1{2 y_0^2 \bar\rho^2 \Delta v_{t_k}}{\left(\Delta x_{t_k} - y_0 \rho \int_{t_{k-1}}^{t_k} e^{\nu B^H_s} dB_s \right)^2} \right|\nu \bsB^H = \bseta}.
\eeaa
Note that conditioned on $\nu \bsB^H = \bseta$, we have
\beaa
&& \Delta v_{t_k} = \int_{t_{k-1}}^{t_k} e^{2 \nu B^H_s} ds \approx e^{2 \eta_{t_{k-1}}} \Delta t = e^{2\nu\sum_{j=0}^{k-1} K_H(t_{k-1}, t_j) b_{t_j} \Delta t} \Delta t
\eeaa
as well as
\beaa
&& \Delta x_{t_k} - y_0 \rho \int_{t_{k-1}}^{t_k} e^{\nu B^H_s} dB_s \approx
\Delta x_{t_k} - y_0 \rho e^{\eta_{t_{k-1}}} b_{t_{k-1}} \Delta t \\
&=& \left(\frac{\Delta x_{t_k}}{\Delta t} - y_0 \rho e^{\nu\sum_{j=0}^{k-1} K_H(t_{k-1}, t_j) b_{t_j} \Delta t} b_{t_{k-1}} \right) \Delta t.
\eeaa
It follows that the first term in \eqref{eqn:log-p-multip} has the limit
\beaa
&& \sum_{k=1}^n \Eof{\left. -\frac1{2 y_0^2 \bar\rho^2 \Delta v_{t_k}}{\left(\Delta x_{t_k} - y_0 \rho \int_{t_{k-1}}^{t_k} e^{\nu B^H_s} dB_s \right)^2} \right|\nu \bsB^H = \bseta} \\
&\approx& -\sum_{k=0}^n \frac1{2y_0^2 \bar\rho^2 e^{2\nu\sum_{j=0}^{k-1} K_H(t_{k-1}, t_j) b_{t_j} \Delta t}} \left(\frac{\Delta x_{t_k}}{\Delta t} - y_0 \rho e^{\nu\sum_{j=0}^{k-1} K_H(t_{k-1}, t_j) b_{t_j} \Delta t} b_{t_{k-1}} \right)^2 \Delta t \\
&\longrightarrow& -\frac12 \int_0^T \frac1{y_0^2 \bar\rho^2 e^{2 \nu \int_0^t K_H(t,s)b_s ds}} \left(\dot x_t - y_0 \rho e^{\nu \int_0^t K_H(t,s) b_s ds} b_t\right)^2 dt
\eeaa
as $n \to \infty$.

Putting the two limits together, we obtain heuristically for $T \approx 0$ that
\bea
&& -\log \Pof{X_t = x_t, Y_t = y_t, \mbox{ for } t \in [0,T]} \nonumber \\
&\approx& \frac12 \int_0^T \frac1{y_0^2 \bar\rho^2 e^{2\nu \int_0^t K_H(t,s)b_s ds}} \left(\dot x_t - y_0 \rho e^{\nu\int_0^t K_H(t,s) b_s ds} b_t\right)^2 dt + \frac12 \int_0^T b_t^2 dt \nonumber \\
&=& \frac12 \int_0^T \frac1{\bar\rho^2 y_t^2} \left(\dot x_t - \rho y_t b_t\right)^2 dt + \frac12 \int_0^T b_t^2 dt  \nonumber \\
&=& \frac12 \int_0^T \frac1{\bar\rho^2}\left(\frac{\dot x_t}{y_t} - \rho b_t\right)^2 dt + \frac12 \int_0^T b_t^2 dt, \label{eqn:energy-funcl}
\eea
where $b \in L^2[0,T]$ satisfies the integral equation
\[
\log y_t - \log y_0 = \nu \int_0^t K_H(t,s) b_s ds
\]
for all $t \in [0,T]$. We remark that \eqref{eqn:energy-funcl} should serve as the rate function for the sample path large deviation principle in small time for $(X_t,Y_t)$. Moreover, one may define the ``geodesic" from the initial point $(x_0, y_0)$ to the terminal point $(x_T, y_T)$ in the fSABR plane as the path $(x_t^*, y_t^*)$ which minimizes the functional \eqref{eqn:energy-funcl}, i.e.,
\beaa
(x_t^*, y_t^*) := \mathop{\argmin}\limits_{t\mapsto (x_t,y_t)} \frac12 \int_0^T \frac1{\bar\rho^2}\left(\frac{\dot x_t}{y_t} - \rho b_t\right)^2 dt + \frac12 \int_0^T b_t^2 dt,
\eeaa
where again $b_t$ is determined by solving the integral equation
\begin{equation}
\log y_t - \log y_0 = \nu \int_0^t K_H(t,s) b_s ds. \label{eqn:y-b-int-eq}
\end{equation}
Also, the minimizer can be regarded as the ``geodesic" connecting $(x_0, y_0)$ and $(x_T, y_T)$. 
\begin{remark}
Note that $b_t$ is indeed determined by the inverse operator $K_H^{-1}$ applied to $\log\frac{y_t}{y_0}$. In particular, with $H = \frac12$ this inverse operator reduces to the usual derivative. Thus, with $H = \frac12$,
\[
b_t = \frac{d}{dt}\left(\log\frac{y_t}{y_0}\right) = \frac{\dot y_t}{y_t}.
\]
The functional \eqref{eqn:energy-funcl} becomes
\beaa
&& \frac12 \int_0^T \frac1{\bar\rho^2}\left(\frac{\dot x_t}{y_t} - \rho b_t\right)^2 dt + \frac12 \int_0^T b_t^2 dt \\
&=& \frac12 \int_0^T \frac1{\bar\rho^2}\left(\frac{\dot x_t}{y_t} - \rho \frac{\dot y_t}{y_t} \right)^2 dt + \frac12 \int_0^T \frac{\dot y_t^2}{y_t^2} dt \\
&=& \frac12 \int_0^T \frac1{\bar\rho^2 y_t^2}\left(\dot x_t^2 - 2 \rho \dot x_t \dot y_t + \dot y_t^2\right) dt.
\eeaa
The last expression is the energy functional (up to the constant factor $\frac12$) associated with the Riemann metric $ds^2 = \frac1{\bar\rho^2 y^2}(dx^2 - 2\rho dx dy + dy^2)$. The diffusion process associated with this Riemann metric is governed by the SDEs
\beaa
&& dX_t = Y_t dW_t, \\
&& dY_t = Y_t dZ_t,
\eeaa
where $W_t$ and $Z_t$ are correlated Brownian motion with constant correlation $\rho$, which up to a linear transformation is the upper plane model of the Poincar\'e space. In other words, with $H = \frac12$, the functional \eqref{eqn:energy-funcl} recovers the energy functional for the classical Poincar\'e space, which is isometric to the SABR plane.
\end{remark}
Lastly, with the aid of sample path large deviation principle \eqref{eqn:energy-funcl}, it is nearly a common practice, say by applying the Laplace asymptotic formula, to conclude that the log premium of an out-of-money call in small time has the asymptotic as $t \to 0$
\[
-\log C(k, t) \approx -\log \Pof{X_t \geq k} \approx \frac12 \int_0^T \frac1{\bar\rho^2}\left(\frac{\dot x_t^*}{y_t^*} - \rho b_t^*\right)^2 dt + \frac12 \int_0^T {b_t^*}^2 dt,
\]
where $(x_t^*, y_t^*, b_t^*)$ denotes the optimal path that minimizes the functional \eqref{eqn:energy-funcl} subject to the constraint $x_t^* = k$ and $y_t^*$, $b_t^*$ satisfy the integral equation \eqref{eqn:y-b-int-eq}. Thus, by applying \eqref{eqn:iv-small-time-asymp}, an approximation of implied volatility in small time is readily obtained. We summarize the result in the following proposition which, with $H = \frac12$, recovers the SABR formula \eqref{eqn:sabr-formula}. However, for $H \neq \frac12$, the numerical implementation of \eqref{eqn:fSABR} is more involved than that of \eqref{eqn:fSABR-fake} since, as opposed to a one dimensional optimization problem, it is subject to solving a two-dimensional constrained variational problem.
\begin{proposition} (fSABR formula) \label{thm:fSABR} \\
Let $k = \log\left(\frac K{s_0}\right)$ be the log moneyness.
The implied volatility $\sigma_{\BS}(k,t)$ in small time to expiry has the asymptotics
\begin{equation}
\sigma_{\BS}^2 \approx \frac{k^2}T \left( \int_0^T \left\{\frac1{\bar\rho^2 {y_t^*}^2} \left(\dot x_t^* - \rho y_t^* b_t^* \right)^2 + {b_t^*}^2 \right\} dt \right)^{-1}, \label{eqn:fSABR}
\end{equation}
where $(x_t^*,b^*)$ is the minimizer of the variational problem
\[
(x^*,b^*) = \mbox{argmin} \left\{ \dot x, b \in L^2[0,T]:\int_0^T \left(\frac1{\bar\rho^2 y_t^2} \left(\dot x_t - \rho y_t b_t\right)^2 + b_t^2\right) dt \right\}
\]
with $x_T = k$ and $y_t^*$ satisfying
\[
\log y^*_t - \log y_0 = \nu \int_0^t K_H(t,s) b^*_s ds
\]
for $t \in [0,T]$. Notice that \eqref{eqn:fSABR} recovers the SABR formula \eqref{eqn:sabr-formula} with $H = \frac12$.
\end{proposition}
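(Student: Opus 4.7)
The plan is to chain three ingredients that are already in place in the paper: the heuristic sample path large deviations principle of Section \ref{sec:ldp} with rate functional \eqref{eqn:energy-funcl}; the Laplace / Varadhan asymptotic applied to $\mathbb{P}[X_T \geq k]$; and the model-free small time implied volatility asymptotic \eqref{eqn:iv-small-time-asymp}.

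First I would invoke that, for $k > x_0$, the leading-order short-time behavior satisfies
\beaa
-\log C(k,T) \sim -\log \mathbb{P}[X_T \geq k] \quad \text{as } T \to 0,
\eeaa
since the payoff $e^{X_T} - e^k$ contributes only subexponential corrections to the dominant decay. Applying the LDP together with the contraction principle along the evaluation map $x \mapsto x_T$ then identifies this limit with the infimum of
\beaa
I(x, y, b) := \tfrac12 \int_0^T \tfrac1{\bar\rho^2}\Big(\tfrac{\dot x_t}{y_t} - \rho b_t\Big)^2 dt + \tfrac12 \int_0^T b_t^2\, dt
\eeaa
over admissible triples $(x, y, b)$ with $x_T \geq k$ and with $y$ coupled to $b$ through \eqref{eqn:y-b-int-eq}. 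Since $I$ is coercive and convex in $(\dot x, b)$, the terminal inequality $x_T \geq k$ is saturated at the optimum (one can always truncate a terminal excursion at strictly lower cost), reducing the constraint to $x_T = k$ and yielding exactly the variational problem in the statement; I denote its minimizer by $(x^*, y^*, b^*)$.

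Second, I would feed this rate into \eqref{eqn:iv-small-time-asymp}: to leading order,
\beaa
\sigma_{\BS}^2(k,T) \sim \frac{k^2}{2T\,|\log C(k,T)|} \sim \frac{k^2}{T \int_0^T \big\{\tfrac1{\bar\rho^2 y_t^{*2}}(\dot x_t^* - \rho y_t^* b_t^*)^2 + b_t^{*2}\big\}\, dt},
\eeaa
where the $\tfrac12$ from $I$ cancels the $2$ inside $\sqrt{2T|\log C|}$ after squaring. This is precisely \eqref{eqn:fSABR}. For the recovery at $H = 1/2$, I would use that $K_{1/2}(t,s) = \mathbf{1}_{[0,t]}(s)$, so \eqref{eqn:y-b-int-eq} degenerates to $b_t = \dot y_t/(\nu y_t)$; substituting renders the rate integrand proportional to the Poincar\'e energy density $(dx^2 - 2\rho\, dx\, dy + dy^2)/(\bar\rho^2 y^2)$, whose minimum over paths from $(x_0,y_0)$ to the vertical line $x = k$ is the squared geodesic distance, and \eqref{eqn:fSABR} collapses to $\sigma_{\BS} \sim \nu|k|/D(\zeta)$ as in \eqref{eqn:sabr-formula}.

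The main obstacle is of course the rigorous LDP itself, which the authors explicitly defer to future work; modulo that, one still needs a mild lower-semicontinuity / goodness argument to justify Varadhan on the non-closed half-space $\{x_T \geq k\}$, an existence and regularity statement for the minimizer (which requires inverting the Volterra operator $K_H$, classical for $H \leq 1/2$ and more delicate for $H > 1/2$), and an argument that the terminal constraint is active. Granting these, the proof is essentially a three-line concatenation of the LDP, the Laplace asymptotic, and \eqref{eqn:iv-small-time-asymp}, and no routine computation beyond what appears in the heuristic derivation already given in Section \ref{sec:ldp} is required.
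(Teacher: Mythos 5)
Your proposal follows essentially the same route as the paper: chain the heuristic sample-path LDP with rate functional \eqref{eqn:energy-funcl} through a Laplace/Varadhan identification $-\log C(k,T) \approx -\log \Pof{X_T \geq k}$, then feed the resulting rate into the model-free small-time implied-volatility asymptotic \eqref{eqn:iv-small-time-asymp}. Like the paper, you correctly flag that the rigorous LDP is the essential missing ingredient, and your additional remarks (contraction principle, saturation of the terminal constraint $x_T \geq k$, and the precise relation $b_t = \dot y_t/(\nu y_t)$ at $H = \tfrac12$) merely fill in details the paper leaves implicit without changing the approach.
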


%
%

\section{Conclusion and discussion} \label{sec:conclusion}
We showed in this paper a bridge representation in Fourier space and a small time asymptotic for the joint probability of lognormal fractional SABR model for general $\rho \in (-1,1)$.
An application of the asymptotics of the joint density is an approximation of the implied volatility in small time. Due to the different nature of methodologies, the newly obtained approximation of implied volatilities in small time does not recover the celebrated SABR formula for implied volatility (to the zeroth order) when the Hurst exponent $H$ equals a half. To recover the SABR formula, we presented a heuristic derivation of the sample path large deviation principle for the lognormal fractional SABR model by bootstrapping via the multiperiod joint density. We emphasize once again that the same trick is applicable to general fractional SABR models, i.e., to include a local volatility component in the process $S_t$ for underlying asset. We leave the rigorous proof of the sample path large deviation principle for fractional SABR models in a future work. Lastly, the bridge representation methodology is also applicable to the case in which the volatility process is governed by an exponential fractional Ornstein-Uhlenbeck process since a fractional Ornstein-Uhlenbeck process is Gaussian as well. However, as time to expiry approaches zero, the mean reversion part does not really play a role in the large deviation regime.


\section*{Acknowledgement} 
We are grateful for helpful discussions with the participants of the conferences: At the Frontiers of Quantitative Finance at
International Centre of Mathematical Sciences, Edinburgh, UK and Mathematics of Quantitative Finance at Mathematisches Forschungsinstitut Oberwolfach, Oberwolfach, Germany. JA is supported by JSPS KAKENHI Grant Number $23330109$,
$24340022$, $23654056$ and $25285102$, and the project RARE-318984 (an FP7 Marie Curie IRSES).
THW is partially supported by the Natural Science Foundation of China grant 11601018.

%
%

\section{Appendix - Technical proofs}
In the appendix, we provide a detailed error analysis of the asymptotic expansion for \eqref{eqn:small-time-p} and a version of Laplace asymptotic formula that is readily applicable to our case. 

\subsection{Error analysis} \label{sec:error-estimate}
Let $C^\infty_0(\R^2)$ be the space of smooth functions defined on $\R^2$ with compact support.
For a given $f \in C^\infty_0(\R^2)$, recalling $\eta=\ln{(y/y_0)}$, from \eqref{eqn:bridge-rep} we have
\bea
&& \Eof{f(X_t, Y_t)} = \iint f(x, y) p(t;x, y) dx\, dy \nonumber \\
&=& \frac1{2\pi}\iiint f(x, y) \frac{e^{-\frac{\eta^2}{2\nu^2 t^{2H}}}}{ y\sqrt{2\pi \nu^2 t^{2H}}} e^{i(x - x_0)\xi} \E_{\frac\eta\nu} \left[ e^{i\left(-\rho \int_0^t y_0 e^{\nu B^H_s} dB_s + \frac{y_0^2 v_t}2 \right)\xi} e^{-\frac{\bar\rho^2 y_0^2 v_t\xi^2}2 }\right] d\xi dx dy \nonumber \\
&=& \frac1{2\pi}\iint e^{-ix_0\xi} \hat f(\xi, y) \frac{e^{-\frac{\eta^2}{2\nu^2 t^{2H}}}}{ y\sqrt{2\pi \nu^2 t^{2H}}} \E_{\frac\eta\nu} \left[ e^{i\left(-\rho \int_0^t y_0 e^{\nu B^H_s} dB_s + \frac{y_0^2v_t}2  \right)\xi} e^{-\frac{\bar\rho^2 y_0^2 v_t\xi^2}2 }\right] dyd\xi \nonumber \\
&=& \frac1{2\pi}\int e^{-ix_0\xi} \Eof{\hat f(\xi, Y_t) e^{i\left(-\rho \int_0^t y_0 e^{\nu B^H_s} dB_s + \frac{y_0^2v_t}2  \right)\xi} e^{-\frac{\bar\rho^2 y_0^2  v_t\xi^2}2}} d\xi, \label{eqn:Ef}
\eea
where
\[
\hat f(\xi,y) = \int e^{i\xi x} f(x, y) dx
\]
is the Fourier transform of $f$ with respect to $x$.

Note that the right-hand side of \eqref{eqn:small-time-p} equals the right-hand side of \eqref{eqn:bridge-rep-0th}. We compare \eqref{eqn:Ef} with the following expression obtained by using the approximate joint density in \eqref{eqn:small-time-p} and obtain
\bea\label{eqn:Ef-approx}
&& \frac1{2\pi} \iiint f(x, y) \frac{e^{-\frac{\eta^2}{2\nu^2t^{2H}}}}{ y\sqrt{2\pi \nu^2 t^{2H}}} \nonumber \\
&& \quad\times e^{i\left(x- x_0\right)\xi}e^{-\frac12(\bar{\rho}^2\xi-i)\xi\int_0^ty_0^2e^{2\nu m_s}ds}\mathbb{E}_{\frac\eta\nu}\left[e^{-i\rho\xi\int_0^ty_0e^{\nu m_s}dB_s}\right]
 \, d\xi dx dy\nonumber\\
 &=& \frac1{2\pi} \iint e^{-ix_0\xi}\hat{f}(\xi,y)\frac{e^{-\frac{\eta^2}{2\nu^2t^{2H}}}}{ y\sqrt{2\pi \nu^2 t^{2H}}} \nonumber \\
&& \quad\times\mathbb{E}_{\frac\eta\nu}\left[e^{i\left(-\rho\int_0^ty_0e^{\nu m_s}dB_s+\frac{y_0^2}{2}\int_0^te^{2\nu m_s}ds\right)\xi}e^{-\frac{\bar{\rho}^2y_0^2\xi^2}{2}\int_0^te^{2\nu m_s}ds}\right]
 \, d\xi dx dy \nonumber\\
&=& \frac1{2\pi} \int e^{-ix_0\xi} \Eof{\hat f(\xi, Y_t) e^{i\left(-\rho\int_0^ty_0e^{\nu m_s}dB_s+\frac{y_0^2}{2}\int_0^te^{2\nu m_s}ds\right)\xi}e^{-\frac{\bar{\rho}^2y_0^2\xi^2}{2}\int_0^te^{2\nu m_s}ds}} d\xi. \nonumber\\
\eea
For simplification, denote
\[
\lambda_1(t)=e^{i\left(-\rho \int_0^t y_0 e^{\nu B^H_s} dB_s + \frac{y_0^2v_t}2  \right)\xi} e^{-\frac{\bar\rho^2 y_0^2 v_t}2 \xi^2}
\]
and 
\[
\lambda_2(t)=e^{i\left(-\rho\int_0^ty_0e^{\nu m_s}dB_s+\frac{y_0^2}{2}\int_0^te^{2\nu m_s}ds\right)\xi}e^{-\frac{\bar{\rho}^2y_0^2\xi^2}{2}\int_0^te^{2\nu m_s}ds}.
\]
Then the modulus of the difference between \eqref{eqn:Ef} and \eqref{eqn:Ef-approx} is equal to
\bea\label{eqn-diff}
 \left|\frac1{2\pi}\int e^{-ix_0\xi} \Eof{\hat f(\xi, Y_t) \left(\lambda_1(t)-\lambda_2(t)\right)} d\xi \right|. 
\eea
The goal is to show that \eqref{eqn-diff} converges to zero in the order of $t^{\frac12}$ as $t \to 0$, for every $f \in C^\infty_0(\R^2)$.

By applying the following inequality, for any $z, w \in \mathbb{C}$,
\[
\left| e^z - e^w \right| \leq \left(e^{\Re(z)} + e^{\Re(w)}\right) |z - w|,
\]
where $\Re(z)$ denotes the real part of $z$, we have
\bea\label{eqn:RI}
&&|\lambda_1(t)-\lambda_2(t)|\nonumber\\
&\leq& \left(e^{-\frac{\bar\rho^2 y_0^2 v_t\xi^2}2 } + e^{-\frac{\bar\rho^2 y_0^2 \xi^2}2\int_0^te^{2\nu m_s}ds} \right) \times \nonumber\\
&& \; \left| i\left(-\rho \int_0^t y_0 e^{\nu B^H_s} dB_s + \frac{y_0^2v_t }2 \right)\xi - \frac{\bar\rho^2 y_0^2 v_t\xi^2 }2 \right.  \nonumber\\
&& \left. - i\left(-\rho\int_0^ty_0e^{\nu m_s}dB_s+\frac{y_0^2}{2}\int_0^te^{2\nu m_s}ds\right)\xi+\frac{\bar{\rho}^2y_0^2\xi^2}{2}\int_0^te^{2\nu m_s}ds\right|  \nonumber\\
&\leq& 2\left| \cR_t + i \cI_t \right|
\eea
since $e^{-\frac{\bar\rho^2 y_0^2 v_t}2 \xi^2} + e^{-\frac{\bar\rho^2 y_0^2 \xi^2}2\int_0^te^{2\nu m_s}ds} \leq 2$ for all $t$ and $\xi$.
Apparently, $\cR_t$ and $\cI_t$ are given by
\beaa
&& \cR_t = \left[- v_t +\int_0^te^{2\nu m_s}ds \right] \frac{\bar{\rho}^2y_0^2\xi^2}2 , \\
&& \cI_t = \left(-\rho \int_0^t y_0 e^{\nu B^H_s} dB_s + \frac{y_0^2v_t }2 + \rho\int_0^ty_0e^{\nu m_s}dB_s-\frac{y_0^2}{2}\int_0^te^{2\nu m_s}ds\right)\xi.
\eeaa

In the following, $K$ denotes a generic constant whose value may vary in different contexts. Then, by \eqref{eqn-diff}, \eqref{eqn:RI} and 
H\"older's inequality, we have
\bea\label{eqn:hatf-RI}
&&\left|\frac1{2\pi}\int e^{-ix_0\xi} \Eof{\hat f(\xi, Y_t) \left(\lambda_1(t)-\lambda_2(t)\right)} d\xi \right|\nonumber \\
&\leq& 2 \int \Eof{|\hat f(\xi,Y_t)| \left| \cR_t + i \cI_t \right|} d\xi\nonumber\\
&\leq& 2\left(\E\int |\hat f(\xi, Y_t)|^{(1-\epsilon) p} d\xi \right)^{\frac1p} \;
\left(\int \Eof{|\hat f(\xi, Y_t)|^{\epsilon q} \left| \cR_t + i \cI_t \right|^q} d\xi \right)^{\frac1q} \nonumber \\
&\leq& K \left(\E\int |\hat f(\xi, Y_t)|^{(1-\epsilon) p} d\xi \right)^{\frac1p} \;
\left(\int \Eof{|\hat f(\xi, Y_t)|^{\epsilon q} (|\cR_t|^q + |\cI_t|^q)} d\xi \right)^{\frac1q},
\eea
for some $\epsilon \in (0,1)$ and $\frac1p + \frac1q = 1$, $p,q > 0$.

Since $ f \in C^\infty_0(\R^2)$, it is easy to show the following properties of $\hat f$:
\begin{itemize}
\item[(i)] for any $r\geq 0$, $\displaystyle{\sup_{(\xi,y)\in \mathbb{R}^2}\left|\xi^r\hat f(\xi,y)\right|<\infty}$;
\item[(ii)] for any $r\geq 0$ and  $p>0$, $\displaystyle{\int |\xi|^r \sup_{y\in\mathbb{R}}|\hat f(\xi,y)|^pd\xi}<\infty$.
\end{itemize}
Note that property (ii) can be easily obtained by property (i).

By the above property (ii),  we can show that
\begin{equation}\label{eqn:hatf-int}
\limsup_{t\to0^+} \E\int |\hat f(\xi, Y_t)|^{(1-\epsilon) p} d\xi < \infty.
\end{equation}
We compute the second term in \eqref{eqn:hatf-RI} separately as follows. By changing variables, we get
\bea\label{eqn:L12}
&& \int \Eof{|\hat f(\xi,Y_t)|^{\epsilon q} |\cR_t|^q} d\xi\nonumber \\
&\leq& K\bar{\rho}^{2q} y_0^{2q} \int \Eof{|\hat f(\xi, Y_t)|^{\epsilon q} \left(v_t^q +\left(\int_0^te^{2\nu m_s}ds\right)^q \right)} \xi^{2q} d\xi \nonumber\\
&=& K\bar{\rho}^{2q} y_0^{2q}  t^q  \int \Eof{|\hat f(\xi, Y_t)|^{\epsilon q}\left( \left(\int_0^1 e^{2\nu B^H_{tu}} du\right)^q + \left(\int_0^te^{2 R(1,u)\eta}du\right)^q \right)} \xi^{2q} d\xi \nonumber\\
&=& K\bar{\rho}^{2q} y_0^{2q}  t^q ( L_1 + L_2),
\eea
where
\beaa
&& L_1 := \int \xi^{2q} \Eof{|\hat f(\xi, Y_t)|^{\epsilon q} \left(\int_0^1 e^{2\nu B^H_{tu}} du \right)^q} d\xi,  \\
&& L_2 := \int \xi^{2q} \Eof{|\hat f(\xi, Y_t)|^{\epsilon q} \left(\int_0^1 e^{2 R(1,u)\eta} du \right)^q} d\xi.
\eeaa
By property (ii) of $\hat f$, it is easy to see that
\bea\label{eqn:L2}
 \limsup_{t\to 0^+} L_2 \leq \left(\int_0^1 e^{2 R(1,u)\eta} du \right)^q\int \xi^{2q} \Eof{|\hat f(\xi, Y_t)|^{\epsilon q}}d\xi<\infty.
\eea
For $L_1$, by Jensen's inequality and H\"{o}lder's inequality, we have
\beaa
L_1&\leq& \int \xi^{2q} \Eof{|\hat f(\xi, Y_t)|^{\epsilon q} \int_0^1 e^{2q\nu B^H_{tu}} du} d\xi\\
&\leq& \int \xi^{2q} \left\{\Eof{|\hat f(\xi, Y_t)|^{\epsilon q p_1}}\right\}^{\frac1{p_1}} \left\{\Eof{\left(\int_0^1 e^{2q\nu B^H_{tu}} du \right)^{q_1}}\right\}^{\frac1{q_1}}d\xi \\
&\leq&  \int \xi^{2q} \left\{\Eof{|\hat f(\xi, Y_t)|^{\epsilon q p_1}}\right\}^{\frac1{p_1}} \left\{\int_0^1 \Eof{e^{2q q_1\nu B^H_{tu}}} du\right\}^{\frac1{q_1}} d\xi\\
&=&  \int \xi^{2q}\left\{\Eof{|\hat f(\xi, Y_t)|^{\epsilon q p_1}}\right\}^{\frac1{p_1}} \left\{\int_0^1 e^{2(qq_1\nu)^2(tu)^{2H}} du\right\}^{\frac1{q_1}}d\xi.
\eeaa
where $\frac1{p_1}+\frac1{q_1}=1$ with $p_1, q_1>0$.
Therefore, using property (ii) again, we can easily show 
\begin{equation}\label{eqn:L1}
 \limsup_{t\to 0^+} L_1 \leq \limsup_{t\to 0^+} \int \xi^{2q} \left\{\Eof{|\hat f(\xi, Y_t)|^{\epsilon q p_1}}\right\}^{\frac1{p_1}} d\xi \;\left\{\int_0^1 e^{2(qq_1\nu)^2(u)^{2H}} du\right\}^{\frac1{q_1}} < \infty.
\end{equation}
Thus, it implies from \eqref{eqn:L12}-\eqref{eqn:L1} that 
\begin{equation}\label{eqn:R}
\int \Eof{|\hat f(\xi,Y_t)|^{\epsilon q} |\cR_t|^q} d\xi = O(t^q),
\end{equation}
for any $q > 1$, as $t \to 0^+$.

Similarly, we can write
\beaa
&& \int \Eof{|\hat f(\xi,Y_t)|^{\epsilon q} |\cI_t|^q} d\xi \nonumber\\
&\leq& K \int |\xi|^q \E\left[|\hat f(\xi,Y_t)|^{\epsilon q} \times \right. \nonumber\\
&& \left. \left\{ \left|\rho \int_0^t y_0 e^{\nu B^H_s} dB_s\right|^q + \left|\frac{y_0^2v_t}2 \right|^q +  \left|\rho \int_0^t y_0 e^{\nu m_s} dB_s\right|^q  + \left|\frac{y_0^2}2 \int_0^te^{2\nu m_s}ds\right|^q \right\} \right] d\xi\nonumber\\
&=&K(J_1+J_2+J_3+J_4),
\eeaa
where
\beaa
&& J_1 := |\rho|^q \int |\xi|^q \Eof{|\hat f(\xi,Y_t)|^{\epsilon q} \left|\int_0^t y_0 e^{\nu B^H_s} dB_s\right|^q} d\xi, \\
&& J_2 := \int |\xi|^q \Eof{|\hat f(\xi,Y_t)|^{\epsilon q} \left|\frac{y_0^2v_t}2 \right|^q} d\xi, \\
&& J_3 := \int |\xi|^q \Eof{|\hat f(\xi,Y_t)|^{\epsilon q}  \left|\rho \int_0^t y_0 e^{\nu m_s} dB_s\right|^q} d\xi, \\
&& J_4 := \int |\xi|^q \Eof{|\hat f(\xi,Y_t)|^{\epsilon q} \left|\frac{y_0^2}2 \int_0^te^{2\nu m_s}ds\right|^q} d\xi.
\eeaa
We estimate $J_1$ through $J_4$ separately as follows.
\bit
\item $J_1$: Choosing $p_1>0$ such that $ \frac{qq_1}2 > 1$, by H\"{o}lder's inequality, the Burkholder-Davis-Gundy inequality, Jensen's inequality and a change of variables, we obtain
Notice that
\beaa\label{eqn:J1}
J_1 &\leq& |\rho|^q y_0^q\int |\xi|^q \left\{\Eof{|\hat f(\xi,Y_t)|^{\epsilon q p_1}}\right\}^{\frac1{p_1}} \left\{\Eof{\left|\int_0^t e^{\nu B^H_s} dB_s\right|^{qq_1}}\right\}^{\frac1{q_1}}d\xi\nonumber\\
&\leq& |\rho|^qy_0^q \int |\xi|^q\left\{\Eof{|\hat f(\xi,Y_t)|^{\epsilon q p_1}}\right\}^{\frac1{p_1}} \left\{\Eof{\left|\int_0^t e^{2\nu B^H_s} ds\right|^{\frac{qq_1}2}}\right\}^{\frac1{q_1}} d\xi \nonumber\\
&=& |\rho|^qy_0^q t^{\frac q2}\int |\xi|^q \left\{\Eof{|\hat f(\xi,Y_t)|^{\epsilon q p_1}}\right\}^{\frac1{p_1}} \left\{\Eof{\left|\int_0^1 e^{2\nu B^H_{tu}} du\right|^{\frac{qq_1}2}}\right\}^{\frac1{q_1}}d\xi \nonumber\\
&\leq&  |\rho|^qy_0^q t^{\frac q2} \int |\xi|^q\left\{\Eof{|\hat f(\xi,Y_t)|^{\epsilon q p_1}}\right\}^{\frac1{p_1}} \left\{\int_0^1 \Eof{e^{qq_1\nu B^H_{tu}}} du \right\}^{\frac1{q_1}}d\xi\nonumber \\
&=&  |\rho|^qy_0^q t^{\frac q2} \int|\xi|^q\left\{\Eof{|\hat f(\xi,Y_t)|^{\epsilon q p_1}}\right\}^{\frac1{p_1}} \left\{\int_0^1 e^{\frac{(qq_1\nu)^2}2 (tu)^{2H}} du \right\}^{\frac1{q_1}}d\xi.
\eeaa
By property (ii) we have
\beaa\label{eqn:J1-1}
 &&\limsup_{t\to 0^+}\int|\xi|^q\left\{\Eof{|\hat f(\xi,Y_t)|^{\epsilon q p_1}}\right\}^{\frac1{p_1}} \left\{\int_0^1 e^{\frac{(qq_1\nu)^2}2 (tu)^{2H}} du \right\}^{\frac1{q_1}}d\xi\nonumber\\
 &\leq & \limsup_{t\to 0^+} \int \xi^{2q} \left\{\Eof{|\hat f(\xi, Y_t)|^{\epsilon q p_1}}\right\}^{\frac1{p_1}} d\xi \;\left\{\int_0^1 e^{2(qq_1\nu)^2(u)^{2H}} du\right\}^{\frac1{q_1}} < \infty.
\eeaa
Thus, we can see that $J_1 = O(t^{\frac q2})$ as $t \to 0^+$.

\item $J_2$ and $J_4$: The asymptotic behavior of $J_2$ and $J_4$ is the same as that of $t^q L_1$, and hence, $J_2, J_4 = O(t^q)$ as $t \to 0^+$.

\item $J_3$: By using the same technique to $J_1$, we have
\[
J_3 \leq|\rho|^qy_0^q t^{\frac q2}\int |\xi|^q \left\{\Eof{|\hat f(\xi,Y_t)|^{\epsilon q p_1}}\right\}^{\frac1{p_1}} \left\{\Eof{\left|\int_0^1 e^{2 R(1,u)\eta} du\right|^{\frac{qq_1}2}}\right\}^{\frac1{q_1}}d\xi
\]
and $J_3 = O(t^{\frac q2})$ as $t \to 0^+$.
\eit
Thus, putting all the estimates for the $J_i$'s together we get
\begin{equation}\label{eqn:I}
\int \Eof{|\hat f(\xi,Y_t)|^{\epsilon q} |\cI_t|^q} d\xi = O(t^{\frac{q}2}),
\end{equation}
for any $q > 1$, as $t \to 0^+$.

Therefore, it implies from \eqref{eqn:hatf-RI}, \eqref{eqn:hatf-int}, \eqref{eqn:R} and \eqref{eqn:I} that

\beaa
 \left|\frac1{2\pi}\int e^{-ix_0\xi} \Eof{\hat f(\xi, Y_t) \left(\lambda_1(t)-\lambda_2(t)\right)} d\xi \right|=O(t^{\frac12}),
\eeaa
that is, \eqref{eqn-diff} converges to zero in the order of $t^{\frac12}$ as $t \to 0$, for every $f \in C^\infty_0(\R^2)$.

\subsection{Laplace asymptotic formula} \label{sec:laplace-formula}
We prove the following form of Laplace asymptotic formula required in the derivation of the small time asymptotic of the price of an out-of-money call.
\begin{lemma}(Laplace asymptotic formula) \label{lma:Laplace-asymp} \\
Let $\mathcal{C}$ be a closed and convex set in $\R^2$ with nonempty and smooth boundary $\p \mathcal{C}$. Suppose that $\theta(t, x) := \theta_0(x) + t^{\alpha} \theta_1(x) + t^{2\alpha} \theta_2(x)$, with $0\leq2\alpha < 1$, has   continuous second-order partial derivatives in $x\in \mathcal{C}$, and, for every $t$ sufficiently small, the function $\theta(t,x)$ is locally convex  in $\mathcal{C}$ and attains its minimum uniquely at $x^*(t) \in \p \mathcal{C}$. Moreover, there is $\epsilon_0>0$ such that for any $0<\epsilon <\epsilon_0$, there exist $t_0$ and $\delta > 0$ for which

 $$\theta(t, x) \geq \theta(t, x^*(t)) + \delta,\ \forall (t,x) \in [0,t_0]\times\left(\mathcal{C} \setminus B_\epsilon(x^*(t))\right),$$ 
  where $B_\epsilon(x^*(t)) = \{x:|x - x^*(t)| < \epsilon \}$ is the open ball of radius $\epsilon$ centered at $x^*(t)$. 
 
Assume that $f$ has  continuous second-order partial derivatives in $\mathcal{C}$, is integrable over $\mathcal{C}$ (i.e., $\int_\mathcal{C} |f(x)|dx < \infty$) and that $f$ vanishes identically in $\mathcal{C}^c$ and on the boundary $\p\mathcal{C}$ but the inward normal directional derivative of $f$ at $x^*(t)$ is nonzero.

Then, we have the asymptotic expansion, as $t \to 0^+$,
\begin{eqnarray} \label{eqn:lap-int-1}
&& \int_\mathcal{C} e^{-\frac{\theta(x,t)}t} f(x) dx\nonumber \\
&=& \frac{\sqrt{2\pi}\, t^{\frac52}\, e^{-\frac{\theta(t,x^*(t))}t}}{\sqrt{\p_{\mathbf{tan}}^2\theta(t,x^\ast(t))}|\nabla\theta(t,x^*(t))|}
\left[ \frac{\nabla f(x^*(t))\cdot\nabla\theta(t,x^*(t))}{|\nabla\theta(t,x^*(t))|^2} + \frac12 \frac{\p_{\mathbf{tan}}^2 f(x^*(t))}{\p_{\mathbf{tan}}^2 \theta(t,x^*(t))} +o(1)\right],\nonumber\\ 
\end{eqnarray}
where $\p_{\mathbf{tan}}^2 f(x^*)$ and $\p_{\mathbf{tan}}^2 \theta(t, x^*)$ are the second derivatives of $f$ and $\theta$ respectively in the tangential direction to $\mathcal{C}$ at $x^*$.
\end{lemma}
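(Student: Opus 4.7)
The plan is to perform a boundary-adapted Laplace analysis. First, I would invoke the gap hypothesis $\theta(t,x) \geq \theta(t,x^*(t)) + \delta$ on $\mathcal{C} \setminus B_\epsilon(x^*(t))$ to discard the integral outside a small ball around the boundary minimum: the resulting tail contribution is bounded by $e^{-\theta(t,x^*(t))/t}\, e^{-\delta/t} \|f\|_{L^1(\mathcal{C})}$, which is exponentially smaller than the claimed main term of order $t^{5/2} e^{-\theta(t,x^*(t))/t}$ and hence negligible.

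Next, within $B_\epsilon(x^*(t))$ I would pass to local orthonormal coordinates $(s,u)$ based at $x^*(t)$, with $s \geq 0$ the signed distance to $\p\mathcal{C}$ in the inward normal direction $\mathbf{n}$ and $u$ the arclength parameter along the tangent direction $\mathbf{tan}$ to $\p\mathcal{C}$ at $x^*(t)$; the change-of-variables Jacobian is $1 + O(\epsilon)$. Since $x^*(t)$ is a boundary minimum of $\theta(t,\cdot)$, the first-order optimality condition gives $\nabla\theta(t,x^*(t)) = |\nabla\theta(t,x^*(t))|\,\mathbf{n}$ and $\p_{\mathbf{tan}}\theta(t,x^*(t)) = 0$; since $f$ vanishes on $\p\mathcal{C}$, $f(x^*(t)) = 0$ and $\p_{\mathbf{tan}} f(x^*(t)) = 0$. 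Taylor expansions in $(s,u)$ then give
\[
\theta(t,x) = \theta(t,x^*(t)) + s\,|\nabla\theta(t,x^*(t))| + \tfrac{1}{2} u^2 \p_{\mathbf{tan}}^2\theta(t,x^*(t)) + O\bigl(s^2 + s|u| + |u|^3\bigr),
\]
\[
f(x) = s\,\p_{\mathbf{n}} f(x^*(t)) + \tfrac{1}{2} u^2 \p_{\mathbf{tan}}^2 f(x^*(t)) + O\bigl(s^2 + s|u| + |u|^3\bigr).
\]

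I would then rescale $s = t\tilde s$ and $u = \sqrt{t}\,\tilde u$, under which the cubic Taylor remainders in the exponent become $O(\sqrt t)$ and exponentiate to a factor $1 + o(1)$. The leading integral separates into the explicit Gaussian-exponential pieces
\[
\int_0^\infty s\, e^{-s|\nabla\theta|/t}\, ds = \frac{t^2}{|\nabla\theta|^2}, \qquad \int_{\R} e^{-u^2\p_{\mathbf{tan}}^2\theta/(2t)}\, du = \sqrt{\frac{2\pi t}{\p_{\mathbf{tan}}^2\theta}},
\]
together with the $u^2$-weighted counterpart $\int_{\R} u^2 e^{-u^2\p_{\mathbf{tan}}^2\theta/(2t)}\,du = (t/\p_{\mathbf{tan}}^2\theta)\sqrt{2\pi t/\p_{\mathbf{tan}}^2\theta}$. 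Assembling these with the identity $\p_{\mathbf{n}} f = \nabla f\cdot\nabla\theta/|\nabla\theta|$ (valid because $\nabla\theta$ points in the $\mathbf{n}$ direction at $x^*(t)$) would reproduce \eqref{eqn:lap-int-1} after elementary algebra.

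The main technical obstacle will be ensuring the various error terms are genuinely $o(1)$ relative to the leading order, uniformly for $t$ near $0$: the $1 + O(\epsilon)$ Jacobian correction arising from the curvature of $\p\mathcal{C}$, the cubic Taylor remainders in the rescaled variables, and the exponentially small cost of extending the rescaled Gaussian integrals from the image of $B_\epsilon$ to the full half-line and real line. All three become tractable once one observes that $|\nabla\theta(t,x^*(t))|$ and $\p_{\mathbf{tan}}^2\theta(t,x^*(t))$ remain bounded away from $0$ and $\infty$ as $t \to 0^+$, a consequence of the smooth $t$-dependence $\theta(t,x) = \theta_0(x) + t^\alpha\theta_1(x) + t^{2\alpha}\theta_2(x)$ with $2\alpha < 1$ together with the local convexity hypothesis.
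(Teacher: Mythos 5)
Your proposal is correct and follows essentially the same route as the paper's proof: split off $\mathcal{C}\setminus B_\epsilon(x^*(t))$ via the gap hypothesis, pass to boundary-adapted orthonormal coordinates at $x^*(t)$, Taylor-expand $\theta$ and $f$, rescale the normal variable by $t$ and the tangential variable by $\sqrt t$, and compute the resulting half-line exponential and Gaussian moments. The only cosmetic difference is how the linear tangential term in $f$ is disposed of: you note $\p_{\mathbf{tan}} f(x^*(t))=0$ directly (since $f\equiv 0$ on $\p\mathcal{C}$), while the paper keeps $f_1(0)y^1$ and eliminates the corresponding integral by odd symmetry after rescaling; both are valid, and you additionally make explicit the $1+O(\epsilon)$ Jacobian correction that the paper leaves implicit.
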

\begin{proof}
For any $0<\epsilon <\epsilon_0$, we split the integral on the left side of \eqref{eqn:lap-int-1} into two parts as
\begin{equation} \label{eqn:lap-int-2}
\int_\mathcal{C} e^{-\frac{\theta(t,x)}t} f(x) dx = \int_{\mathcal{C} \bigcap B_\epsilon(x^*(t))} e^{-\frac{\theta(t,x)}t} f(x) dx + \int_{\mathcal{C} \setminus B_\epsilon(x^*(t))} e^{-\frac{\theta(t,x)}t} f(x) dx.
\end{equation}
We treat the two terms on the right hand side of \eqref{eqn:lap-int-2} individually. For the first term, since the integration region is restricted to a subset of the small ball $B_\epsilon(x^*(t))$, it can be reparametrized by $y = (y^1, y^2)$ so that in the $y$-coordinates the set $\{ y: y^2 = 0 \}$ corresponds to $\p \mathcal{C}$ and the vectors $\{\p_{y^1}, \p_{y^2}\}$ form a local orthonormal frame around $x^*(t)$. For simplicity, we further assume that in the $y$-coordinates $x^*(t)$ is located at the origin. Note that in the $y$-coordinates the vector $\p_{y^2}$ is parallel to $\nabla \theta(x^*(t))$ as well as the inward normal vector of $\mathcal{C}$ at $x^*(t)$.

We shall use the convention that repeated indices are summed up over their respective ranges. Denote partial derivatives by subindices, we have for $y \in B_\epsilon(x^*(t))$
\beaa
  && \theta(t, y) = \theta(t, 0) + \theta_2(t, 0) y^2 + \frac12 \theta_{ij}(t, 0) y^i y^j + o(|y|^2),  \\
  && f(y) = f_i(0) y^i + \frac12 f_{ij}(0) y^i y^j + o(|y|^2)
\eeaa
since $\theta_1(0)= 0$ for $\theta$ attains its minimum at the boundary point $x^*(t)$. 

Thus, in the $y$-coordinates
the first integral on the right-hand side of \eqref{eqn:lap-int-2} reads
\bea\label{eqn:int-ball}
&& \int_{\mathcal{C} \bigcap B_\epsilon(x^*(t))} e^{-\frac{\theta(t,x)}t} f(x) dx \nonumber\\
&\approx& \int_0^\epsilon \int_{-\epsilon}^\epsilon e^{-\frac1t\left( \theta(t, 0) + \theta_2(t,0) y^2 + \frac12\theta_{ij}(t,0) y^i y^j  \right) } \left[ f_i(0) y^i + \frac12 f_{ij}(0) y^i y^j  \right]  dy^1 dy^2.
\eea
Now, by a change of variables
\[
y^1 = \sqrt t z^1, \quad y^2 = t z^2,
\]
we can write the above integral on the right-hand side of \eqref{eqn:int-ball} as
\bea\label{eqn:ytoz}
&& e^{-\frac{\theta(t,0)}t} t^{\frac32} \int_0^{\frac\epsilon t} \int_{-\frac{\epsilon}{\sqrt t}}^{\frac\epsilon{\sqrt t}} e^{-\left( \theta_2(t,0) z^2 + \frac12\theta_{11}(t,0) (z^1)^2 + \theta_{12}(t,0)z^1z^2\sqrt{t}+\frac12\theta_{22}(t,0)(z^2)^2t \right) } \times\nonumber \\
  && \left[ f_1(0) z^1 \sqrt t + f_2(0) z^2 t + \frac12 f_{11}(0) (z^1)^2 t +f_{12}(0)z^1z^2t^\frac32+\frac12 f_{22}(0)(z^2)^2t^2 \right]  dz^1 dz^2.\nonumber\\
   \eea
  Note that, for any real numbers $a_1,\dots, a_5$, by dominated convergence theorem, we have
  \beaa
 && \lim\limits_{t\to 0}\int_0^{\frac\epsilon t}\int_{-\frac{\epsilon}{\sqrt t}}^{\frac\epsilon{\sqrt t}} e^{-\left( \theta_2(t,0) z^2 + \frac12\theta_{11}(t,0) (z^1)^2 + \theta_{12}(t,0)z^1z^2\sqrt{t}+\frac12\theta_{22}(t,0)(z^2)^2t \right) } \times\nonumber \\
 && \left[a_1z^1+a_2z^2+a_3(z^1)^2+a_4(z^2)^2+a_5z^1z^2\right]dz^1dz^2\\
 &=&\int_0^{\infty}\int_{-\infty}^{\infty} e^{-\left( \theta_2(0,0) z^2 + \frac12\theta_{11}(0,0) (z^1)^2  \right) } \times\nonumber \\
 && \left[a_1z^1+a_2z^2+a_3(z^1)^2+a_4(z^2)^2+a_5z^1z^2\right]dz^1dz^2\in (-\infty, \infty).
  \eeaa
  
 Thus, the quantity in \eqref{eqn:ytoz} equals
  \bea
 && e^{-\frac{\theta(t,0)}t} t^{\frac32}\left\{ \int_0^{\frac\epsilon t} \int_{-\frac{\epsilon}{\sqrt t}}^{\frac\epsilon{\sqrt t}} e^{-\left( \theta_2(t,0) z^2 + \frac12\theta_{11}(t,0) (z^1)^2  \right) } \times\right. \nonumber\\
  && \left.\left[ f_1(0) z^1 \sqrt t + f_2(0) z^2 t + \frac12 f_{11}(0) (z^1)^2 t  \right]  dz^1 dz^2+O\left(t^{\frac12}\right)\right\}\nonumber\\
  &=&e^{-\frac{\theta(t,0)}t} t^{\frac32}\left[\sqrt{t}\cdot I+t\cdot II+t\cdot III+O\left(t^{\frac12}\right)\right],
\eea
where
\beaa
&& I = \int_0^{\frac\epsilon t} \int_{-\frac{\epsilon}{\sqrt t}}^{\frac\epsilon{\sqrt t}} e^{-\left( \theta_2(t,0) z^2 + \frac12\theta_{11}(t,0) (z^1)^2 \right) } f_1(0) z^1 dz^1 dz^2, \\
&& I\!I = \int_0^{\frac\epsilon t} \int_{-\frac{\epsilon}{\sqrt t}}^{\frac\epsilon{\sqrt t}} e^{-\left( \theta_2(t,0) z^2 + \frac12\theta_{11}(t,0) (z^1)^2  \right) } f_2(0) z^2 dz^1 dz^2, \\
&& I\!I\!I = \frac12 \int_0^{\frac\epsilon t} \int_{-\frac{\epsilon}{\sqrt t}}^{\frac\epsilon{\sqrt t}} e^{-\left( \theta_2(t,0) z^2 + \frac12\theta_{11}(t,0) (z^1)^2  \right) } f_{11}(0) (z^1)^2 dz^1 dz^2.
\eeaa
As $t \to 0^+$, we calculate the each integral individually as follows. For $I$, since the function in $z^1$ is an odd function and integral interval for $z^1$ is symmetric about the origin, we obtain
\bea
I &=& f_1(0) \int_0^{\frac\epsilon t}  e^{-\theta_2(t,0) z^2} dz^2 \times
\int_{-\frac{\epsilon}{\sqrt t}}^{\frac\epsilon{\sqrt t}} e^{-\left(\frac12\theta_{11}(t,0) (z^1)^2 \right)} z^1 dz^1\nonumber \\
&=& 0.
\eea

For $I\!I$ and $I\!I\!I$, notice that $\theta_2(t,0)> 0$ and $\theta_{11}(t,0)> 0$, and hence, we obtain
\bea
I\!I &=& f_2(0) \int_0^{\frac\epsilon t} e^{-\theta_2(t,0) z^2} z^2 dz^2 \times \int_{-\frac{\epsilon}{\sqrt t}}^{\frac\epsilon{\sqrt t}} e^{-\left(\frac12\theta_{11}(t,0) (z^1)^2  \right) } dz^1 \nonumber\\
&\approx& f_2(0) \int_0^\infty  e^{-\theta_2(t,0) z^2} z^2 dz^2 \times
\int_{-\infty}^\infty e^{-\frac12\theta_{11}(t,0) (z^1)^2} dz^1 \nonumber\\
&=& \frac{f_2(0)}{\theta_2^2(t,0)} \times \sqrt{\frac{2\pi}{\theta_{11}(t,0)}},
\eea
and
\bea\label{eqn:III}
I\!I\!I &=& \frac{f_{11}(0)}2 \int_0^{\frac\epsilon t} e^{- \theta_2(t,0) z^2} dz^2 \times \int_{-\frac{\epsilon}{\sqrt t}}^{\frac\epsilon{\sqrt t}} e^{-\left(\frac12\theta_{11}(t,0) (z^1)^2  \right) } (z^1)^2 dz^1\nonumber \\
&\approx& \frac{f_{11}(0)}2 \int_0^\infty e^{- \theta_2(t,0) z^2} dz^2 \times \int_{-\infty}^\infty e^{-\frac12\theta_{11}(t,0) (z^1)^2 } (z^1)^2 dz^1 \nonumber\\
&=& \frac{f_{11}(0)}{2\theta_2(t,0)} \times \sqrt{\frac{2\pi}{\theta_{11}^3(t,0)}}.
\eea

Therefore, it implies from \eqref{eqn:int-ball}-\eqref{eqn:III} that, in the $y$-coordinates,
\begin{equation}\label{eqn:first}
 \int_{\mathcal{C} \bigcap B_\epsilon(x^*(t))} e^{-\frac{\theta(t,x)}t} f(x) dx\approx e^{-\frac{\theta(t,0)}{t}}t^{\frac52}\sqrt{\frac{2\pi}{\theta_{11}(t,0)}}\left[\frac{f_2(0)}{\theta_2^2(t,0)}+ \frac{f_{11}(0)}{2\theta_2(t,0)\theta_{11}(t,0)}+o(1) \right].
\end{equation}

For the second term on the right-hand side of \eqref{eqn:lap-int-2}, we get
\begin{equation}\label{eqn:second}
   \left| \int_{\mathcal{C} \setminus B_\epsilon(x^*)} e^{-\frac{\theta(t,x)}t} f(x) dx \right|
  \leq \int_{\mathcal{C} \setminus B_\epsilon(x^*)} e^{-\frac{\theta(t,x^*) + \delta}t} |f(x)| dx
  \leq e^{-\frac{\delta}t} e^{-\frac{\theta(t,x^*)}t} \int_\mathcal{C} |f(x)| dx.
\end{equation}
As a result, the second term is exponentially small (at the rate $\delta$) as $t \to 0^+$ compared to the expansion \eqref{eqn:lap-int-1} obtained for the first term, hence it does not contribute to the asymptotic expansion. 

Finally, by \eqref{eqn:lap-int-2}, \eqref{eqn:first} and \eqref{eqn:second} we obtain the Laplace expansion \eqref{eqn:lap-int-1} by rewriting the expressions for the right-hand side of \eqref{eqn:first} in the $x$-coordinates.
\end{proof}

%
%


\begin{thebibliography}{19}


\bibitem{figarch}
{\sc Baillie, R.T.}, {\sc Bollerslev, T.}, and {\sc Mikkelsen, H.O.},
{Fractionally integrated generalized autoregressive conditional heteroskedasticity},
{\em Journal of Econometrics}, {\bf 74}, pp.3-30, 1996.

\bibitem{cheng-wang}
{\sc Cheng, X.} and {\sc Wang, T.-H.},
{Bessel bridge representation for heat kernel in hyperbolic space},
{\em Preprint}, available in ArXiv, 2016.


\bibitem{comte-renault}
{\sc Comte, F.} and {\sc Renault, E.},
{Long memory in continuous-time stochastic volatility models},
{\em Mathematical Finance}, {\bf 8}(4), pp.291--323, 1998.

\bibitem{comte-coutin-renault}
{\sc Comte, F.}, {\sc Coutin, L.} and {\sc Renault, E.},
{Affine fractional stochastic volatility models},
{\em Annals of Finance}, {\bf 8}, pp.337--378, 2012.

\bibitem{DU}
{\sc Decreusefond, L.} and {\sc  {\"U}st{\"u}nel, A. S.}
{Stochastic analysis of the fractional {B}rownian motion.}
{\em Potential Anal.}, {\bf 10(2)}, pp177--214, 1999.

\bibitem{ekstrom-lu}
{\sc Ekstr\"om, E.} and {\sc Lu, B.},
{Short-time implied volatility in exponential L\'evy models},
{\em International Journal of Theoretical and Applied Finance}, {\bf 18}(4), 2015.

\bibitem{rosenbaum}
{\sc El Euch, O.} and {\sc Rosenbaum, M.},
{The characteristic function of rough Heston models},
{\em Preprint}, 2016.


\bibitem{gao-lee}
{\sc Gao, K.} and {\sc Lee, R.},
{Asymptotics of implied volatility to arbitrary order},
{\em Finance and Stochastics}, Forthcoming, 2015.

\bibitem{gatheral-jaisson-rosenbaum}
{\sc Gatheral, J.}, {\sc Jaisson, T.} and {\sc Rosenbaum, M.},
{Volatility is rough},
{\em Preprint}, 2014.

\bibitem{granger}
{\sc Granger, C.W.J.} and {\sc Joyeux, R.},
{An introduction to long memory time series models and fractional differencing},
{\em Journal of Time Series Analysis}, {\bf 1}, pp.15-39, 1980.

\bibitem{jack}
{\sc Guennoun, H.}, {\sc Jaquier, A.} and {\sc Roome, P.},
{Asymptotics of the fractional Heston model},
{\em Preprint}, 2014.

\bibitem{hklw}
{\sc Hagan, P.}, {\sc Kumar, D.}, {\sc Lesniewski, A.}, and {\sc Woodward, D.},
{Managing smile risk},
{\em Wilmott Magazine}, September, pp.84--108, 2002.

\bibitem{hlw}
{\sc Hagan, P.}, {\sc Lesniewski, A.}, and {\sc Woodward, D.},
{Probability distribution in the SABR Model of stochastic volatility},
{\em Springer Proceedings in Mathematics \& Statistics,
Large Deviations and Asymptotic Methods in Finance}, {\bf 110}, pp.1--35, 2015.
\bibitem{hu16}     {\sc Hu, Y.},  Analysis on Gaussian space.
World Scientific, Singapore,  2017. 
\bibitem{ikeda-matsumoto}
{\sc Ikeda, N.} and {\sc Matsumoto, H.},
{Brownian Motion on the Hyperbolic Plane and Selberg Trace Formula},
{\em Journal of Functional Analysis}, {\bf 163}, pp.63--100, 1999.

\bibitem{N06} {\sc Nualart, D.}, The Malliavin Calculus and Related Topics. Second
edition. Springer 2006.
\bibitem{roper-rutkowski}
{\sc Roper, M.} and {\sc Rutkowski, M.},
{A note on the behaviour of the Black-Scholes implied volatility close to expiry},
{\em International Journal of Theoretical and Applied Finance}, {\bf 12}(4), pp.427--441, 2009.

%
%
%
%
%

\end{thebibliography}
\end{document}